\theoremstyle{definition}
\newtheorem{theorem}{Theorem}[section]
\newtheorem{definition}{Definition}[section]
\newtheorem{lemma}[theorem]{Lemma}
\title{A parametric level-set method for partially discrete tomography}
\author{Ajinkya Kadu\footnotemark[2]~, Tristan van Leeuwen\footnotemark[2]~, K. Joost Batenburg\footnotemark[3]}
\date{}
\DeclarePairedDelimiter{\diagpars}{(}{)}
\newcommand{\diag}{\operatorname{diag}\diagpars}
\newcommand{\argmin}{\operatornamewithlimits{arg min}}
\begin{document}

\maketitle              

\let\oldthefootnote\thefootnote
\renewcommand{\thefootnote}{\fnsymbol{footnote}}
\footnotetext[1]{Paper submitted to $20^{th}$ International Conference 
on \textit{Discrete Geometry for Computer Imagery}}
\footnotetext[2]{Mathematical Institute, Utrecht University, The Netherlands. Contact: \url{ajinkyakadu125@gmail.com}}
\footnotetext[3]{Centrum Wiskunde \& Informatica, Amsterdam, The Netherlands}

\let\thefootnote\oldthefootnote

\begin{abstract}
 This paper introduces a parametric level-set method for tomographic reconstruction of partially discrete images. Such images consist of a continuously varying background and an anomaly with a constant (known) grey-value. We represent the geometry of the anomaly using a level-set function, which we represent using radial basis functions. We pose the reconstruction problem as a bi-level optimization problem in terms of the background and coefficients for the level-set function. To constrain the background reconstruction we impose smoothness through Tikhonov regularization. The bi-level optimization problem is solved in an alternating fashion; in each iteration we first reconstruct the background and consequently update the level-set function. We test our method on numerical phantoms and show that we can successfully reconstruct the geometry of the anomaly, even from limited data. On these phantoms, our method outperforms Total Variation reconstruction, DART and P-DART.

\end{abstract}

\section{Introduction}
The need to reconstruct (quantitative) images of an object from tomographic measurements appears in many applications. At the heart of many of these applications is a projection model based on the Radon transform. Characterizing the object under investigation by a function $u(\mathbf{x})$ with $\mathbf{x}\in\mathcal{D}=[0,1]^2$, tomographic measurements are modeled as
\[
p_i = \int_{\mathcal{D}}\, u(\mathbf{x})\delta(s_i - \mathbf{n}(\theta_i)\cdot\mathbf{x})\,\mathrm{d}\mathbf{x},
\]
where $s_i \in [0,1]$ denotes the shift, $\theta_i\in[0,2\pi)$ denotes the angle and $\mathbf{n}(\theta) = (\cos\theta,\sin\theta)$. The goal is to retrieve $u$ from a number, $m$, of such measurements for various shifts and directions.

If the shifts and angles are regularly sampled, the transform can be inverted directly by Filtered back-projection or Fourier reconstruction \cite{kak2001principles}. A common approach for dealing with non-regularly sampled or missing data, is to express $u$ in terms of a basis
\[
u(\mathbf{x}) = \sum_{j=1}^n u_jb(\mathbf{x} - \mathbf{x}_j),
\]
where $b$ are piece-wise polynomial basis functions and $\{\mathbf{x}_j\}_{j=1}^n$ is a regular (pixel) grid. This leads to a set of $m$ linear equations in $n$ unknowns
\[
\mathbf{p} = W\mathbf{u},
\]
with $w_{ij} = \int_{\mathcal{D}}\, b(\mathbf{x} - \mathbf{x}_j)\delta(s_i - \mathbf{n}(\theta_i)\cdot\mathbf{x})\,\mathrm{d}\mathbf{x}$.
Due to noise in the data or errors in the projection model the system of equations is inconsistent, so a solution may not exist. Furthermore, there may be many solutions that fit the observations equally well because the system is underdetermined. A standard approach to mitigate these issues is to formulate a regularized least-squares problem
\[
\min_{\mathbf{u}} {\textstyle\frac{1}{2}}\| W \mathbf{u} - \mathbf{p} \|_2^2 + {\textstyle\frac{\lambda}{2}}\|R\mathbf{u}\|_2^2,
\]
where $R$ is the regularization operator. Such a formulation is popular mainly because very efficient algorithms exist for solving it. Depending on the choice of $R$, however, this formulation forces the solution to have certain properties which may not reflect the truth. For example, setting $R$ to be the discrete Laplace operator will produce a smooth reconstruction, whereas setting $R$ to be the identity matrix forces the individual coefficients $u_i$ to be small. In many applications such quadratic regularization terms do not reflect the characteristics of the object we are reconstructing. For example, if we expect $u$ to be piecewise constant, we could use a Total Variation regularization term $\|R\mathbf{u}\|_1$ where $R$ is a discrete gradient operator \cite{sidky2008image}. Recently, a lot of progress has been made in developing efficient algorithms for solving such non-smooth optimization problems \cite{chambolle2011first}. If the object under investigation is known to consist of only two distinct materials, the regularization can be formulated in terms of a non-convex constraint $\mathbf{u} \in \{u_0, u_1\}^n$. The latter leads to a combinatorial optimization problem, solutions to which can be approximated using heuristic algorithms \cite{batenburg2011dart}.

In this paper, we consider tomographic reconstruction of \emph{partially discrete} objects that consist of a region of constant density embedded in a continuously varying background. In this case, neither the quadratic, Total Variation nor non-convex constraints by themselves are suitable. We therefore propose the following parametrization
\[
u(\mathbf{x})=
\left\{
\begin{matrix}
u_0(\mathbf{x})&\text{if}\,\,\mathbf{x}\in\Omega,\\
u_1&\text{otherwise}.\\
\end{matrix}
\right.
\]
The inverse problem now consists of finding $u_0(\mathbf{x})$, $u_1$ and the set $\Omega$.
We can subsequently apply suitable regularization to $u_0$ separately. To formulate a tractable optimization algorithm, we represent the set $\Omega$ using a level-set function $\phi(\mathbf{x})$ such that
\[
\Omega = \{\mathbf{x}\,|\, \phi(\mathbf{x}) > 0 \}.
\]
In the following sections, we discuss how to formulate a variational problem to reconstruct $\Omega$ and $u_0$ based on a parametric level-set representation of $\Omega$ and assuming we know $u_1$. 

The outline of the paper is as follows.
In section~\ref{section:levelset} we discuss the parametric level-set method and propose some practical heuristics for choosing various paramaters that occur in the formulation.
A joint background-anomaly reconstruction algorithm for partially discrete tomography is discussed in section \ref{section:jointrec}. The results on few moderately complicated numerical phantoms are presented in Section~\ref{section:results}. We provide some concluding remarks in Section~\ref{section:discussion}.

\section{Level-set methods}
\label{section:levelset}

In terms of the level-set function, we can express $u$ as 
\[
u(\mathbf{x}) = (1 - h(\phi(\mathbf{x})))u_0(\mathbf{x}) + h(\phi(\mathbf{x}))u_1,
\]
where $h$ is the Heaviside function and the latter term represents the anomaly.

Level-set methods have received much attention in geometric inverse problems, interface tracking, segmentation and shape optimization. The reason being their ability to handle topological changes. 
The classical level-set method, introduced by Sethian and Osher \cite{osher1988fronts}, solves the Hamiltonian-Jacobi equation, also known as level-set equation.
\begin{equation}
\frac{\partial \phi}{\partial t} + v | \nabla \phi | = 0,
\end{equation}
where $\phi: \mathbb{R}^2 \times \mathbb{R}^+ \rightarrow \mathbb{R}$ denotes the level-set function as a time-dependent quantity for representing the shape and $v$ denotes the normal velocity. In the inverse-problems setting, the velocity $v$ is often derived from the gradient of the cost function with respect to the model parameter \cite{burger2001level}, \cite{dorn2006level}. There are various numerical issues associated with the numerical solution of level-set equation, e.g. reinitialization of the level-set. We refer the interested reader to a seminal paper in level-set method \cite{osher2006level} and its application to computational tomography \cite{klann2011mumford}.

Instead of taking this classical level-set approach, we employ a parametric level-set approach, first introduced by Aghasi et al \cite{aghasi2011parametric}. In this method, the level-set function is parametrized using radial basis functions:
\[ 
\phi(\mathbf{x}) = \sum_{j=1}^{n'} \alpha_j \Psi ( \beta_j\|\mathbf{x} - \boldsymbol{\chi}_j\|_2 ), 
\]
where $\Psi(.)$ is a radial basis function, $\{\alpha_j\}_{j=1}^{n'} $ and $\{\chi_j\}_{j=1}^{n'}$ are the amplitudes and nodes respectively, and the parameters $\{\beta_j\}_{j=1}^{n'}$ control the widths. Introducing the kernel matrix $A(\boldsymbol{\chi}, \boldsymbol{\beta})$ with elements
\[
a_{ij} = \Psi(\beta_j\|\mathbf{x}_i - \boldsymbol{\chi}_j \|_2),
\]
we can now express $\mathbf{u}$ as
\begin{equation}
\mathbf{u} = (1 - h(A(\boldsymbol{\chi}, \boldsymbol{\beta}) \boldsymbol{\alpha}))\odot \mathbf{u}_0 + h(A(\boldsymbol{\chi}, \boldsymbol{\beta}) \boldsymbol{\alpha}) u_1,
\label{eq:getufrompls}
\end{equation}
where $h$ is applied element-wise to the vector $A(\boldsymbol{\chi}, \boldsymbol{\beta}) \boldsymbol{\alpha}$ and $\odot$ denotes the element-wise (Hadamard) product.
By choosing the parameters $(\boldsymbol{\chi}, \boldsymbol{\beta}, \boldsymbol{\alpha})$ appropriately we can represent any (smooth) shape. To simplify matters and make the resulting optimization problem more tractable, we consider a fixed regular grid $\{\boldsymbol{\chi}_j\}_{j=1}^{n'}$ and a fixed width $\beta_j \equiv \beta$. In the following we choose $\beta$ in accordance with the gridspacing $\Delta \chi$ as $\beta = 1/(\eta \Delta \chi)$, where $\eta$ determines the influence of RBF on its neighbors.

\subsubsection{Example}
To show that the reconstruction of level-set with a finitely many radial basis functions, we consider the level-set shown in Figure~\ref{fig:lsrbf} (a). With ${n'} = 196$ RBFs, it is possible to reconstruct a smooth shape discretized on a grid with $n = 256 \times 256$ pixels.
\begin{figure}[!ht]
\centering
\begin{tabular}{cccc}
(a) & (b) & (c) & (d) \\
 \includegraphics[width=0.2\columnwidth]{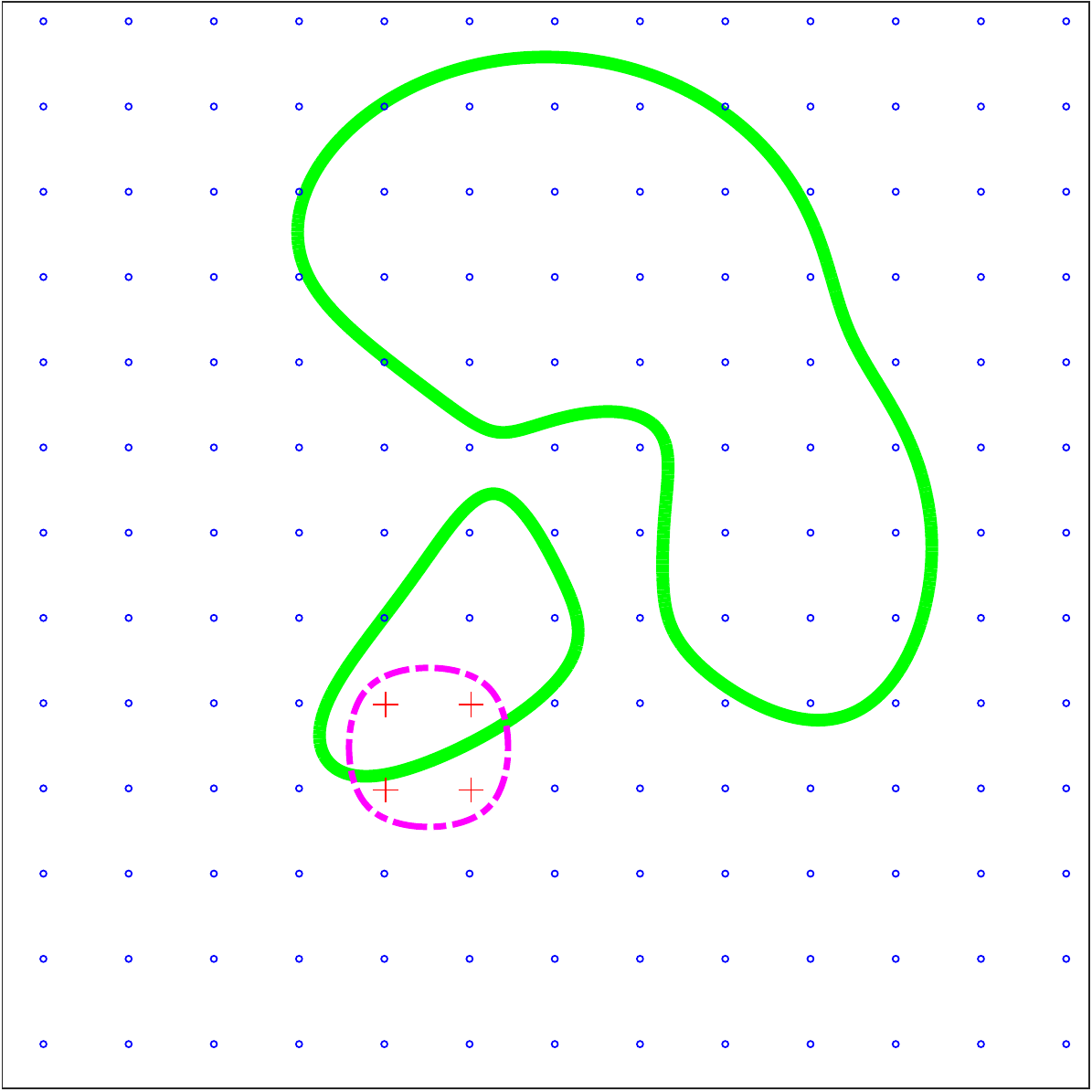} & \includegraphics[width=0.25\columnwidth]{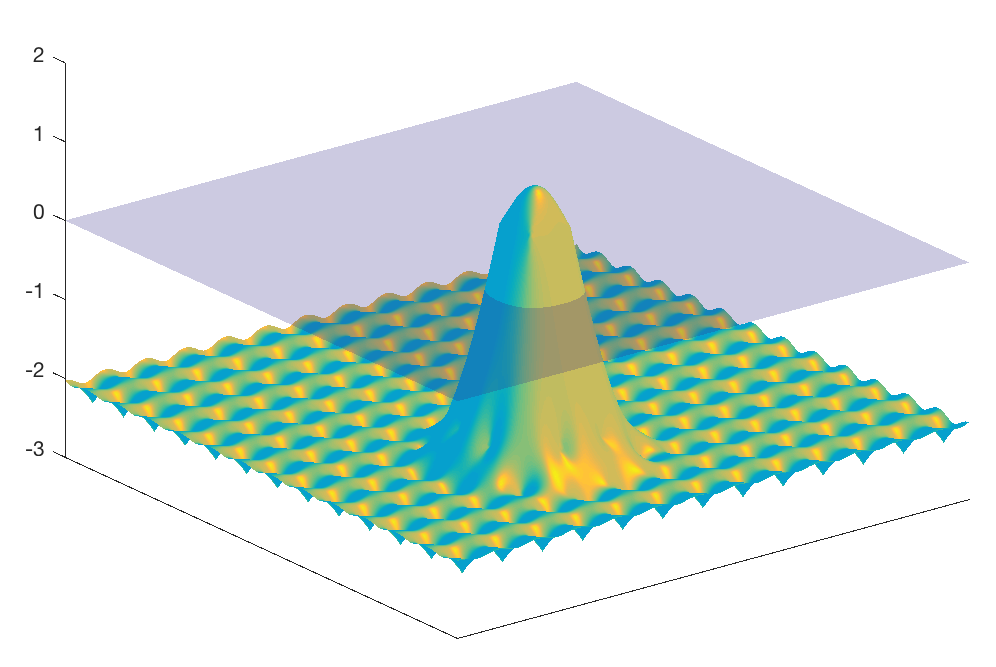} &
 \includegraphics[width=0.2\columnwidth]{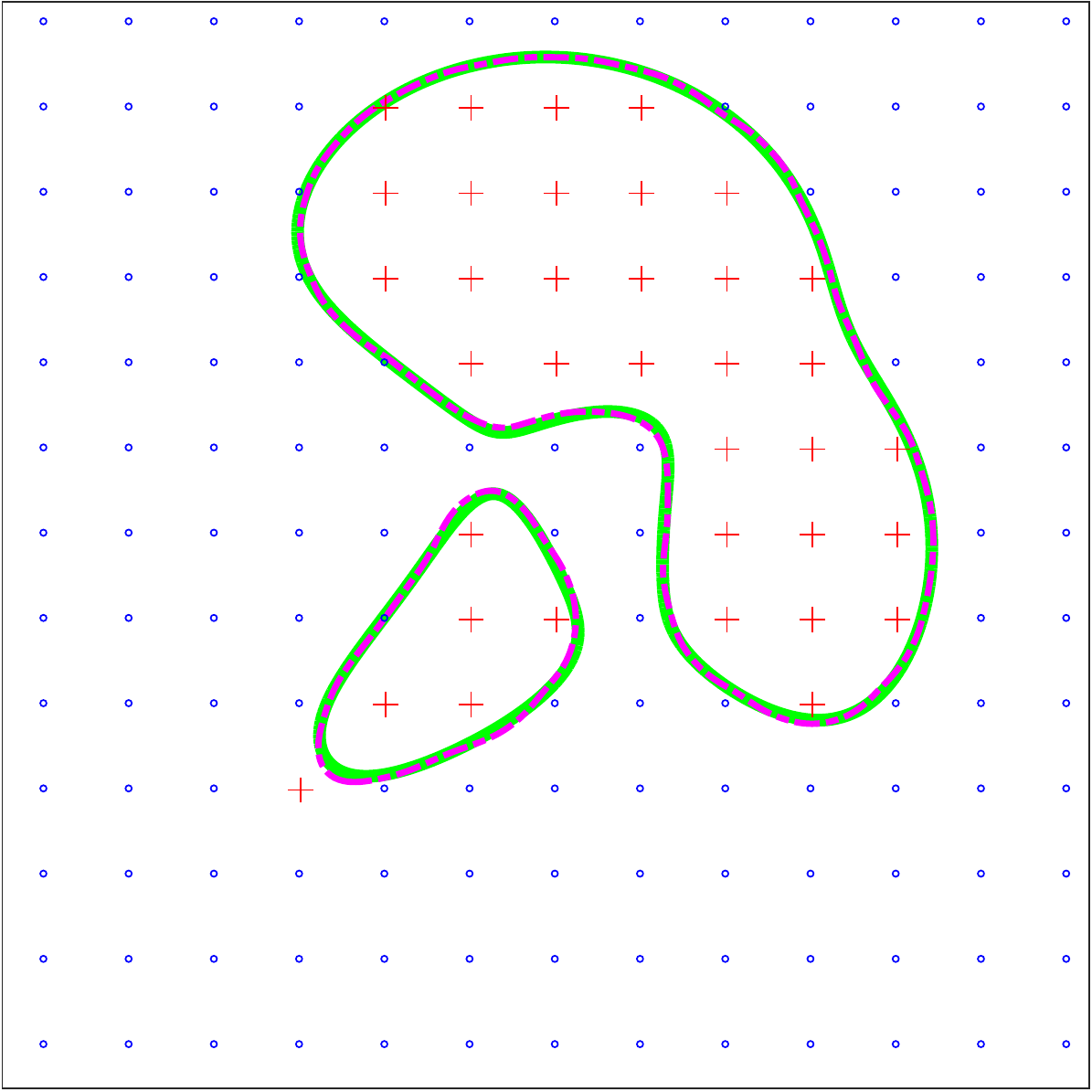} & \includegraphics[width=0.25\columnwidth]{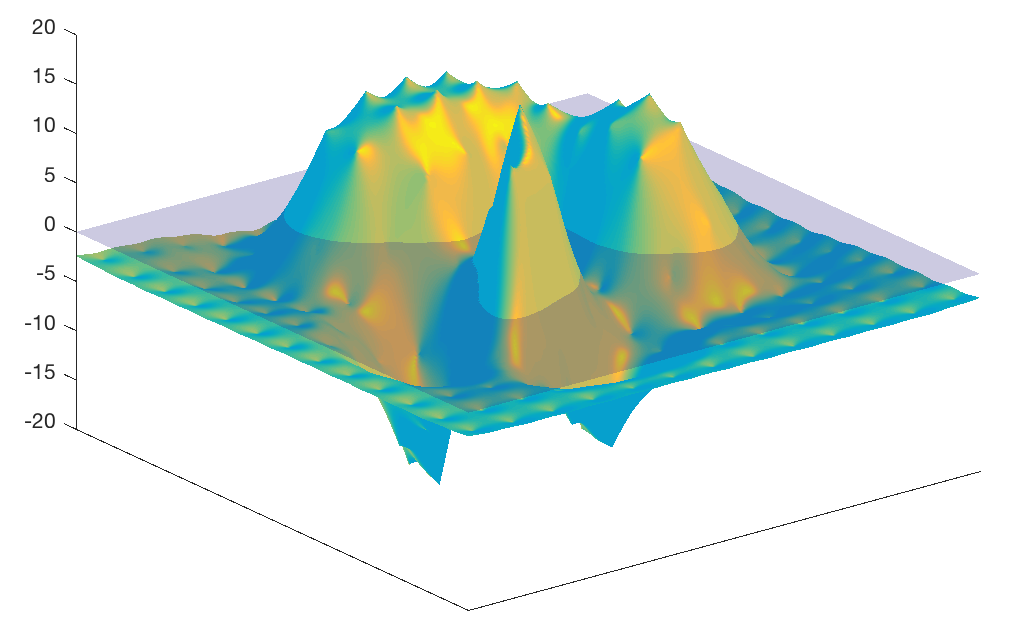}
\end{tabular} 
\caption{Any (sufficiently) smooth level-set can be reconstructed from radial basis functions. (a) Level-set to be reconstructed is denoted by \textit{green} line. Initial level-set (\textit{dash-dotted} line) is generated by some positive RBFs (denoted by \textit{red plusses}) near the center and negative RBFs all around (denoted by \textit{blue dots}) (b) Initial level-set function and the 0-level plane (c) Reconstructed level-set denoted by \textit{dash-dotted} line with corresponding positive and negative RBFs  (d) Final level-set function}
\label{fig:lsrbf}
\end{figure}

Finally, the discretized reconstruction problem for determining the shape is now formulated as 
\begin{align}
\min_{\boldsymbol{\alpha}} \left\lbrace f(\boldsymbol{\alpha}) = \| W[ (u_1 - \mathbf{u}_0)\odot h_{\epsilon}(A \boldsymbol{\alpha})] - (\mathbf{p} - W\mathbf{u}_0)\|_2^2 \right\rbrace,
\label{eq:PLSf}
\end{align} 
where $h_{\epsilon}$ is a smooth approximation of the Heaviside function. The gradient and Gauss-Newton Hessian of ${f(\boldsymbol{\alpha})}$ are given by 
\begin{align}
\label{eq:PLSgH}
\begin{split}
\nabla f(\boldsymbol{\alpha}) &= A^T D_{\boldsymbol{\alpha}}^T W^T \mathbf{r}(\boldsymbol{\alpha}),
\\
H_{GN}(f(\boldsymbol{\alpha})) &=  A^T D_{\boldsymbol{\alpha}}^T W^TW D_{\boldsymbol{\alpha}} A  .
\end{split}
\end{align}
where the diagonal matrix and residual vectors are given by
\begin{align*}
D_{\boldsymbol{\alpha}} = \diag{ (u_1-\mathbf{u}_0) \odot h_{\epsilon}'(A\boldsymbol{\alpha}) }, \quad & 
\mathbf{r}(\boldsymbol{\alpha}) = W[(u_1 - \mathbf{u}_0)\odot h_\epsilon(A \boldsymbol{\alpha})] - (\mathbf{p} - W\mathbf{u}_0).
\end{align*}
Using a Gauss-Newton method, the level-set parameters are updated as
\[
\boldsymbol{\alpha}^{(k+1)} = \boldsymbol{\alpha}^{(k)} - \mu^{(k)}\left(H_{GN}(f(\boldsymbol{\alpha}^{(k)}))\right)^{-1}\nabla f(\boldsymbol{\alpha}^{(k)}),
\]
where $\mu_k$ is a suitable stepsize and $\boldsymbol{\alpha}^{(0)}$ is a given initial estimate of the shape.

From equation~\ref{eq:PLSgH}, it can be observed that the ability to update the level-set parameters depends on two main factors: 1) The difference between $\mathbf{u}_0$ and $u_1$, and 2) the derivative of the Heaviside function. Hence, the support and smoothness of $h'_{\epsilon}$ plays a crucial role in the sensitivity. More details on the choice of $h_{\epsilon}$ are discussed in section \ref{subsection:heaviside}.

\subsubsection{Example}
We demonstrate the parametric level-set method on a (binary) discrete tomography problem. We consider the model described in Figure~\ref{fig:discreteTomo}(a). For a full-angle case ($0 \leq \theta \leq \pi$) with a large number of samples, Figure~\ref{fig:discreteTomo}(c) shows that it is possible to accurately reconstruct a complex shape.
\begin{figure}
\centering
\renewcommand{\arraystretch}{1.5}
\begin{tabular}{>{\centering}m{1.4in} >{\centering}m{1.4in} >{\centering\arraybackslash}m{1.4in}}
(a) & (b) & (c) \\
\includegraphics[width=0.23\columnwidth]{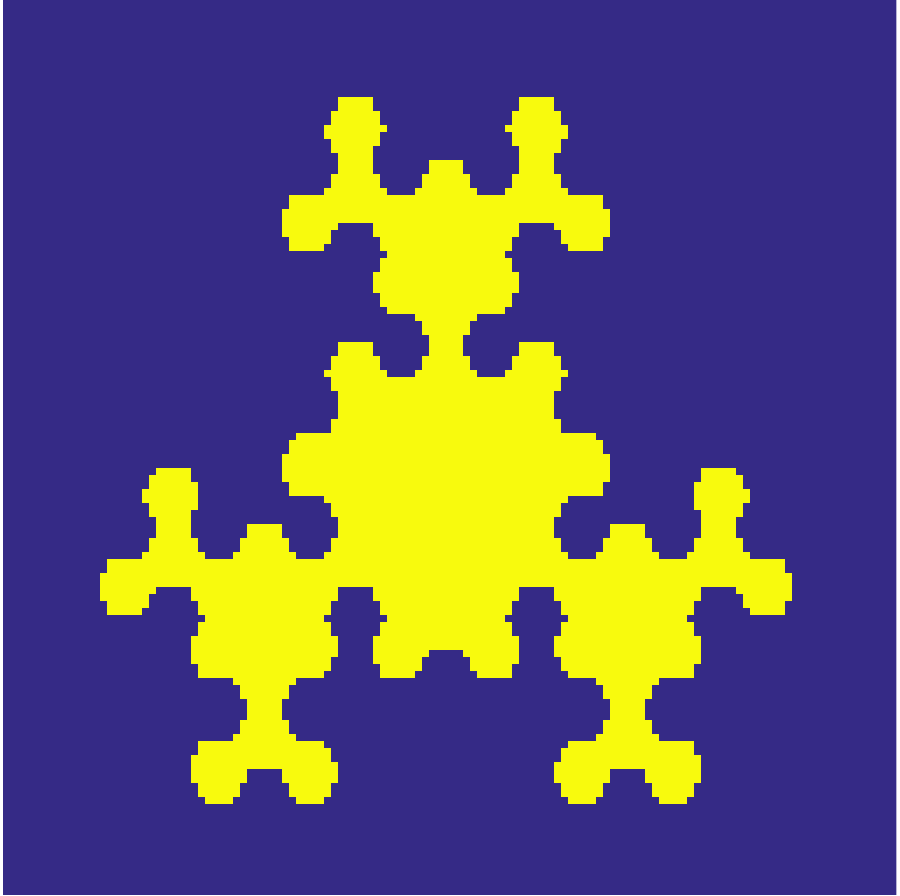}  & \includegraphics[width=0.25\columnwidth]{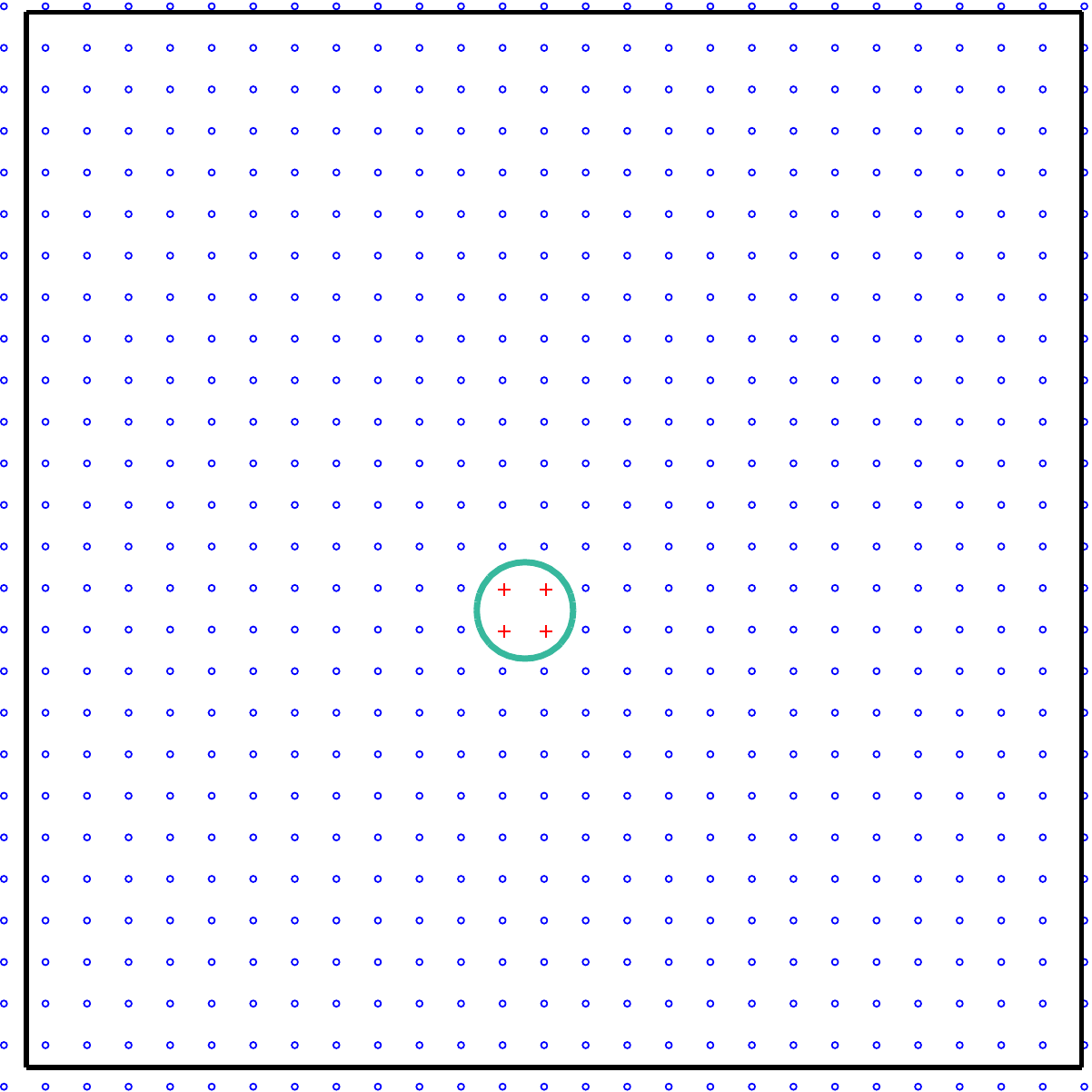} & \includegraphics[width=0.25\columnwidth]{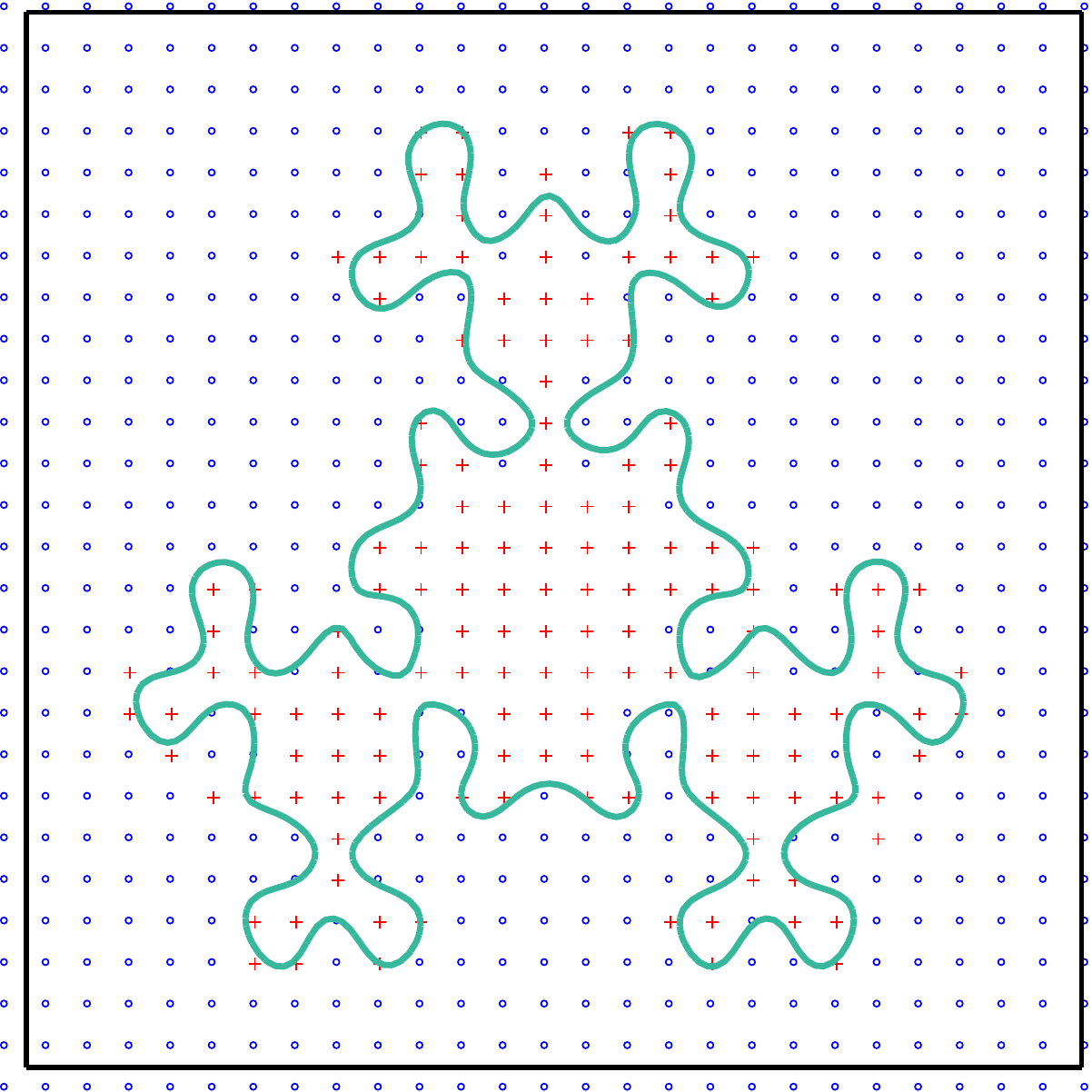}\\
(d) & (e) & (f) \\
\includegraphics[width=0.3\columnwidth]{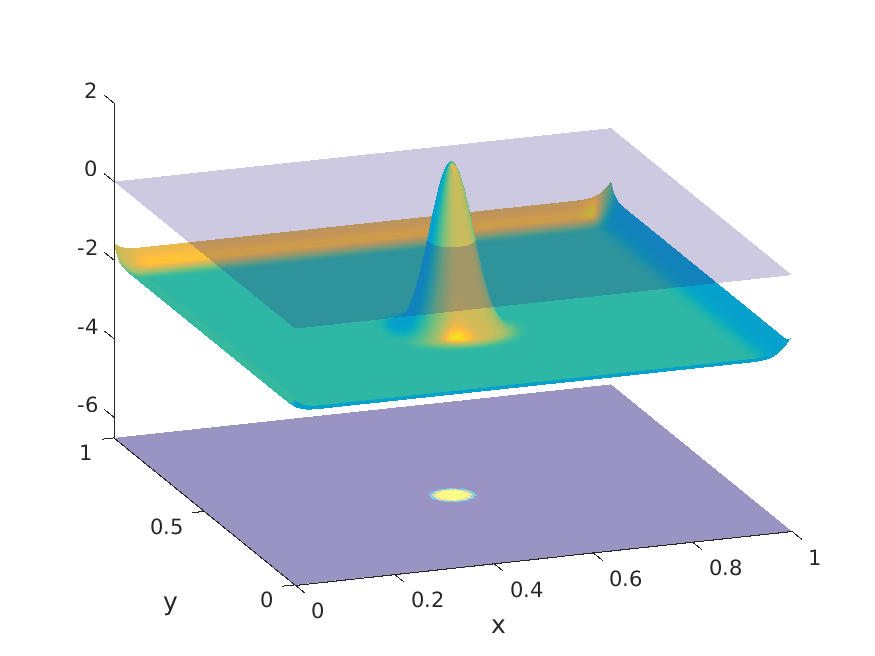} & \includegraphics[width=0.3\columnwidth]{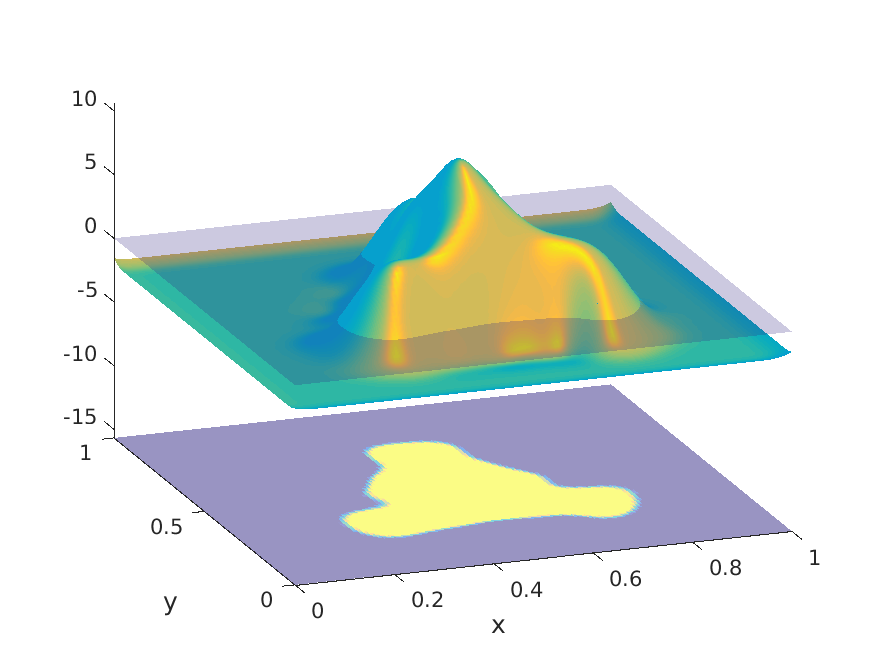} & \includegraphics[width=0.3\columnwidth]{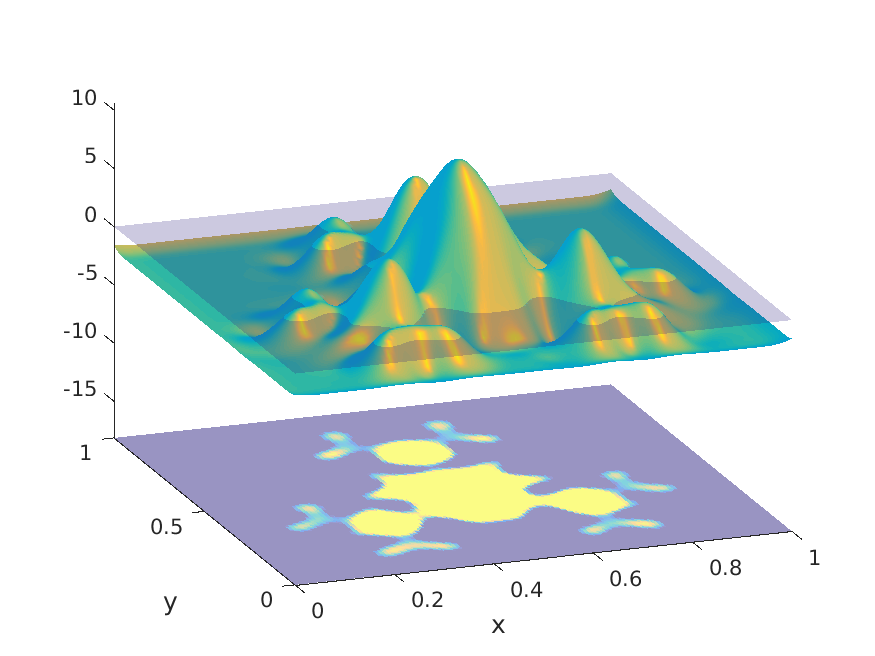}
\end{tabular} 
\caption{Parametric level-set method for Discrete tomography problem. (a) True model ($n = 256 \times 256$) (b) RBF grid ($n' = 27 \times 27$) with initial level-set denoted by \textit{green line}, positive and negative RBFs are denoted by \textit{red pluses} and \textit{blue dots} respectively  (c) Final level-set denoted by the \textit{green line}, and the corresponding positive and negative RBFs (d) Initial level-set function (e) level-set function after 10 iterations (f) final level-set function after 25 iterations.}
\label{fig:discreteTomo}
\end{figure}

\subsection{Approximation to Heaviside function}
\label{subsection:heaviside}
The update of the level-set function primarily depends on the Heaviside function. Various approximations have been mentioned earlier \cite{aghasi2011parametric}. These approximations suffer from the variational region of Dirac-Delta function near its peak ($\delta|_{x = 0}$) which amplifies the gradient disproportionally. This sometimes results in poor updates for the level-set parameter $\boldsymbol{\alpha}$, and hence ruining the reconstructions. To solve this issue, we propose a new formulation of the Heaviside function. We construct the piecewise Dirac-Delta function shown in equation~\eqref{eq:dirac-delta}:
\begin{align}
\delta(\mathbf{x}) = \begin{cases}
0 & \quad \mathbf{x} \leq -\epsilon \\
\frac{1}{4(1-\mu)\epsilon} \left( 1 + \frac{\mathbf{x}+(1-\mu)\epsilon}{\mu \epsilon} + \frac{1}{\pi}\sin(\pi \frac{\mathbf{x}+(1-\mu)\epsilon}{\mu \epsilon}) \right)   &  \quad   -\epsilon < \mathbf{x} \leq - \mu \epsilon \\
\frac{1}{2(1-\mu)\epsilon} &  \quad - \mu \epsilon < \mathbf{x} \leq \mu \epsilon \\
\frac{1}{4(1-\mu)\epsilon} \left( 1 - \frac{\mathbf{x}-(1-\mu)\epsilon}{\mu \epsilon} - \frac{1}{\pi}\sin(\pi \frac{\mathbf{x}-(1-\mu)\epsilon}{\mu \epsilon}) \right)   &  \quad  \mu \epsilon < \mathbf{x} \leq \epsilon \\
0 &  \quad   \mathbf{x} \geq \epsilon
\end{cases}
\label{eq:dirac-delta}
\end{align}
This new approximation has been plotted in Figure~\ref{fig:heavi}. The above formulation provides mainly 3 benefits: 1) constant sensitivity in the boundary region controlled by parameter $\mu$, 2) a smooth transition part and 3) the compact support. 

\begin{definition}
In accordance with the compact approximation of the Heaviside function with width $\epsilon$, a level-set boundary, denoted by $\partial \Omega$, is defined as the set of all the points $\mathbf{x} \in \mathbb{R}^2$ satisfying the condition $h_\epsilon'(\phi(\mathbf{x})) > 0$.
\end{definition}
Figure~\ref{fig:heavi}(c) shows a graphical representation of level-set boundary.
\begin{lemma}
For \textit{any} smooth and compact approximation of the Heaviside function with \textit{finite} width $\epsilon$, there exists a relation between level-set boundary and gradient of level-set function, given by $ |\delta_{\mathbf{x}}^T \nabla \phi(\mathbf{x})| \leq \epsilon $, where, $\delta_{\mathbf{x}} = \max_{x \in \partial \Omega} |\mathbf{x} - \mathbf{x}_0|$ and $\mathbf{x}_0$ is the point on the level-set.
\label{lemma:relation}
\end{lemma}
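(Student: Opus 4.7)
The plan is to combine the compact-support property of $h'_\epsilon$ with a first-order Taylor (or mean-value) expansion of $\phi$ about a level-set point. The key observation is that by definition $\mathbf{x}\in\partial\Omega$ requires $h'_\epsilon(\phi(\mathbf{x}))>0$; since the proposed approximation in \eqref{eq:dirac-delta} is supported in $[-\epsilon,\epsilon]$, this forces $|\phi(\mathbf{x})|\leq\epsilon$ for every boundary point. This is the only place where the hypothesis ``smooth and compact approximation of width $\epsilon$'' enters.

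Next I would pick a reference point $\mathbf{x}_0$ on the true zero-level set, so that $\phi(\mathbf{x}_0)=0$, and expand
\[
\phi(\mathbf{x})=\phi(\mathbf{x}_0)+(\mathbf{x}-\mathbf{x}_0)^T\nabla\phi(\mathbf{x}_0)+O\bigl(\|\mathbf{x}-\mathbf{x}_0\|^2\bigr).
\]
Identifying the displacement $\mathbf{x}-\mathbf{x}_0$ with the quantity $\delta_{\mathbf{x}}$ (taken for the boundary point farthest from $\mathbf{x}_0$, in line with the ``$\max$'' in the definition), and dropping the quadratic remainder under the smoothness assumption, the expansion together with the bound $|\phi(\mathbf{x})|\leq\epsilon$ from the previous step yields
\[
\bigl|\delta_{\mathbf{x}}^T\nabla\phi(\mathbf{x}_0)\bigr|\leq\epsilon.
\]
Finally, since $\phi$ is smooth and $\mathbf{x}$ lies within the thin strip around $\mathbf{x}_0$, I would invoke continuity of $\nabla\phi$ to replace $\nabla\phi(\mathbf{x}_0)$ by $\nabla\phi(\mathbf{x})$ (to leading order in $\epsilon$), recovering the stated inequality. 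A cleaner variant avoids the Taylor remainder altogether by invoking the mean value theorem on the segment $[\mathbf{x}_0,\mathbf{x}]$, giving an \emph{exact} identity $\phi(\mathbf{x})=\delta_{\mathbf{x}}^T\nabla\phi(\xi)$ for some intermediate $\xi$.

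The main obstacle, as hinted above, is not the inequality itself but the precise interpretation of the symbols: the statement treats $\delta_{\mathbf{x}}$ as a vector in the dot product $\delta_{\mathbf{x}}^T\nabla\phi$ even though the definition formally writes a scalar norm $|\mathbf{x}-\mathbf{x}_0|$, and it evaluates $\nabla\phi$ at $\mathbf{x}$ rather than at $\mathbf{x}_0$ or at an MVT intermediate point. A careful proof therefore needs a sentence fixing the notation (taking $\delta_{\mathbf{x}}$ to be the displacement vector to the maximizing boundary point) and an argument — via smoothness of $\phi$ and smallness of $\epsilon$ — that the gradient evaluation point can be moved without changing the bound. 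Once those two pieces are in place the proof reduces to the two-line Taylor/MVT computation outlined above.
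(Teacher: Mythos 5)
Your proposal follows essentially the same route as the paper's own proof: the compact support of $h'_\epsilon$ gives $|\phi(\mathbf{x})|\leq\epsilon$ on $\partial\Omega$, and a first-order Taylor expansion about the level-set point $\mathbf{x}_0$ (with the remainder dropped) yields the bound. In fact you are somewhat more careful than the paper, which also neglects the higher-order terms and silently passes from $(\mathbf{x}-\mathbf{x}_0)$ to $\delta_{\mathbf{x}}$ and from $\nabla\phi(\mathbf{x}_0)$ to $\nabla\phi(\mathbf{x})$ without comment, whereas you flag both notational issues and offer the mean-value-theorem variant.
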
 

\begin{proof}
From Taylor series expansion for $\phi(\mathbf{x})$ near the level-set point $\mathbf{x}_0$, we get
\[\phi(\mathbf{x}) = \phi(\mathbf{x}_0) + (\mathbf{x} - \mathbf{x}_0)^T \nabla \phi(\mathbf{x}_0) + \mathcal{O}(\|\mathbf{x} - \mathbf{x}_0\|^2) .\]
$h_\epsilon' (\phi(\mathbf{x})) > 0$ if and only if $| \phi(\mathbf{x}) | < \epsilon$. Neglecting higher-order terms, we get $ |(\mathbf{x} - \mathbf{x}_0)^T \nabla \phi(\mathbf{x}_0) | \leq \epsilon$. This implies the above relation.

\end{proof}

From the lemma~\ref{lemma:relation}, it is important to choose the Heaviside width in such a way that the level-set boundary exists on model grid. For simplicity, we crudely approximate the gradient of level-set function using upper and lower bounds \cite{kadu2016salt}. 
Hence, the heaviside width is represented by
\begin{equation}
\epsilon = \kappa \left( \frac{\max( \phi(\mathbf{x}) ) - \min (\phi (\mathbf{x}) )}{\Delta x} \right) = \kappa \left( \frac{\max(A \boldsymbol{\alpha} ) - \min (A \boldsymbol{\alpha})}{\Delta x} \right),	\label{eq:epsilon}
\end{equation}
where $\kappa$ controls the number of gridpoints a level-set boundary can have.
This formulation of $\epsilon$ solves the re-initialization issue associated with the level-set method. The steepness ($|\nabla \phi(\mathbf{x})| \gg 1$) of the level-set function in the level-set boundary can be handled by this formulation as well, as it adapts the level-set boundary to global change in level-set function.

\begin{figure}[!ht]
\centering
\renewcommand{\arraystretch}{1.5}
\begin{tabular}{ccc}
\includegraphics[scale=0.25]{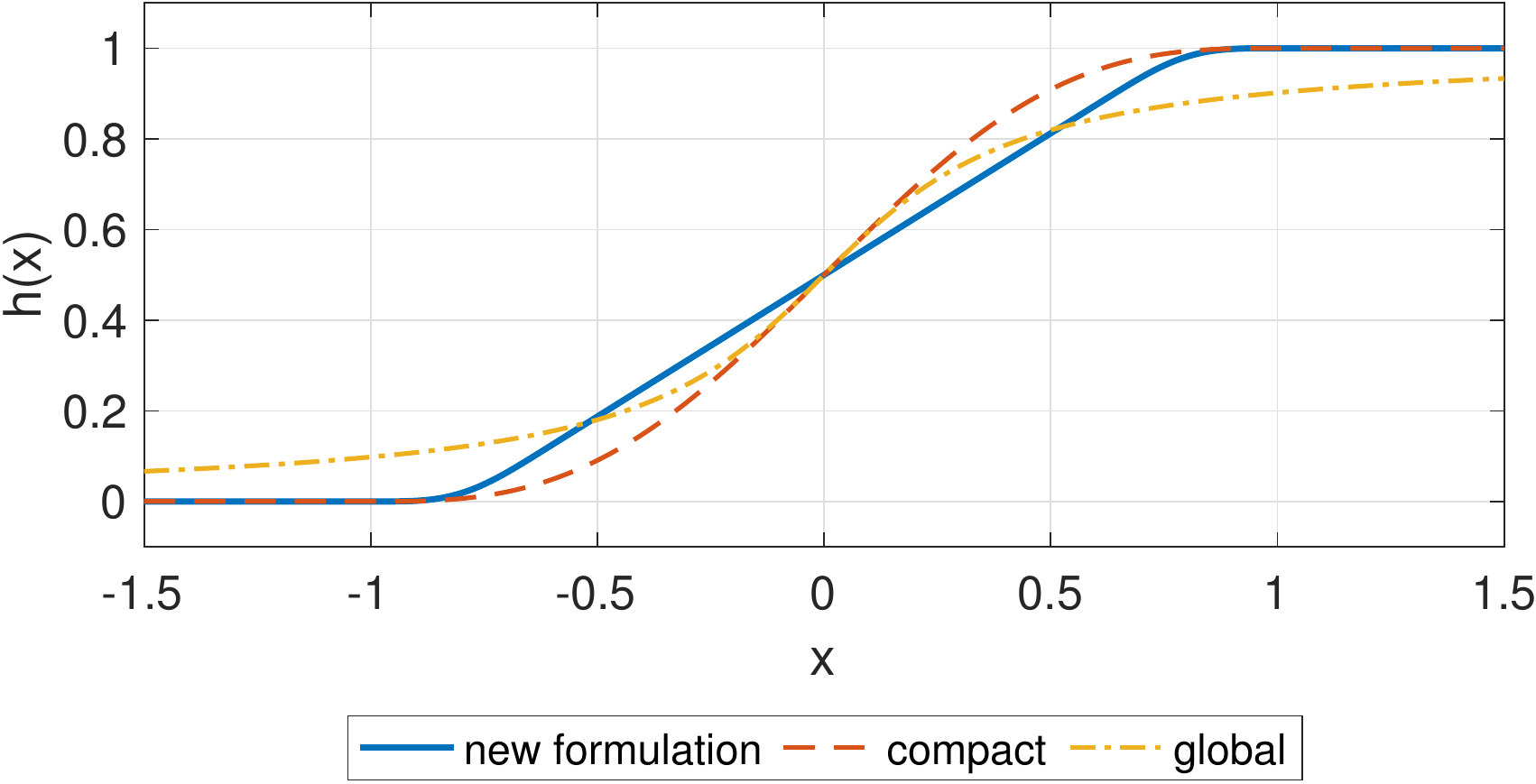} &  \includegraphics[scale=0.25]{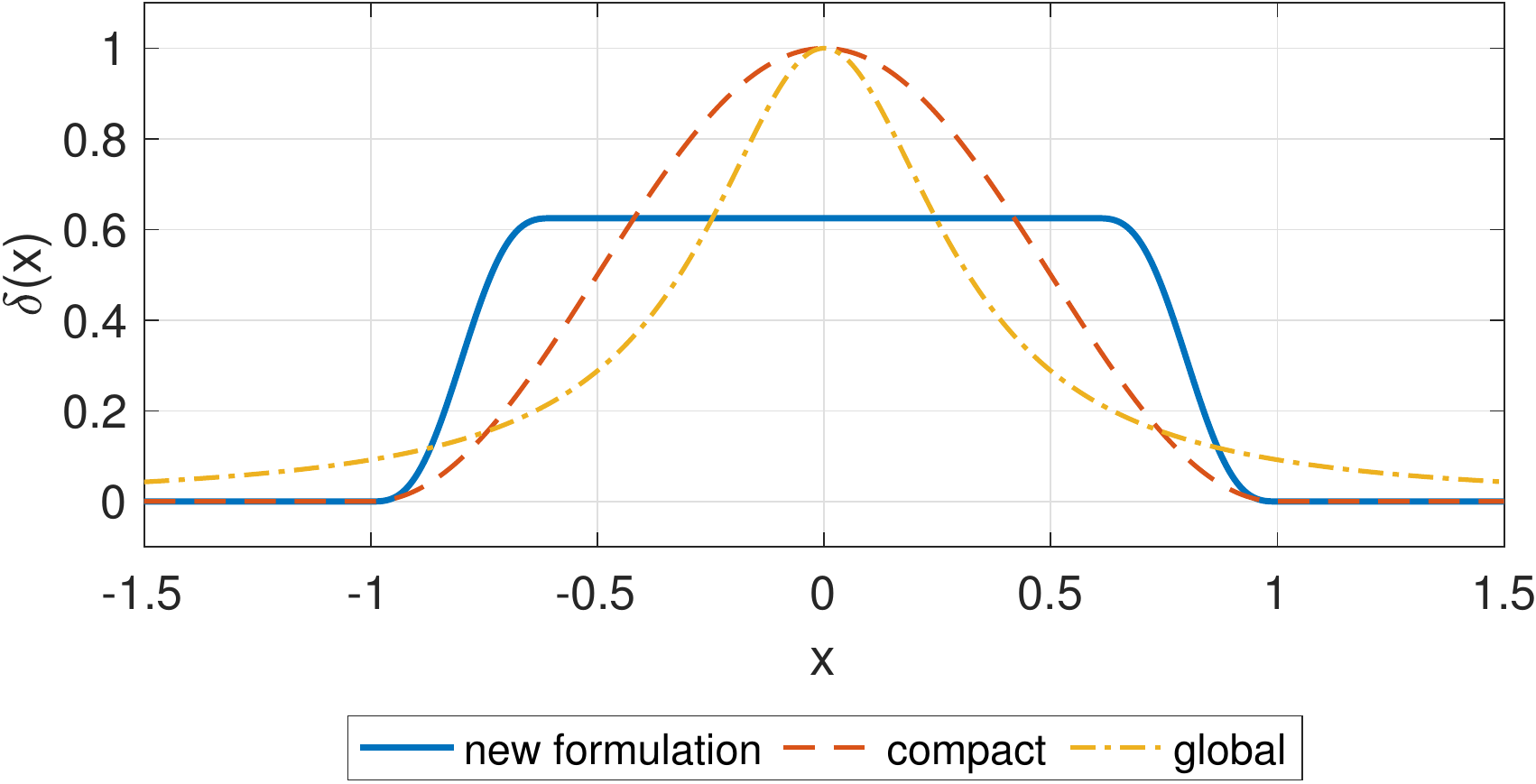} &  \includegraphics[scale=0.25]{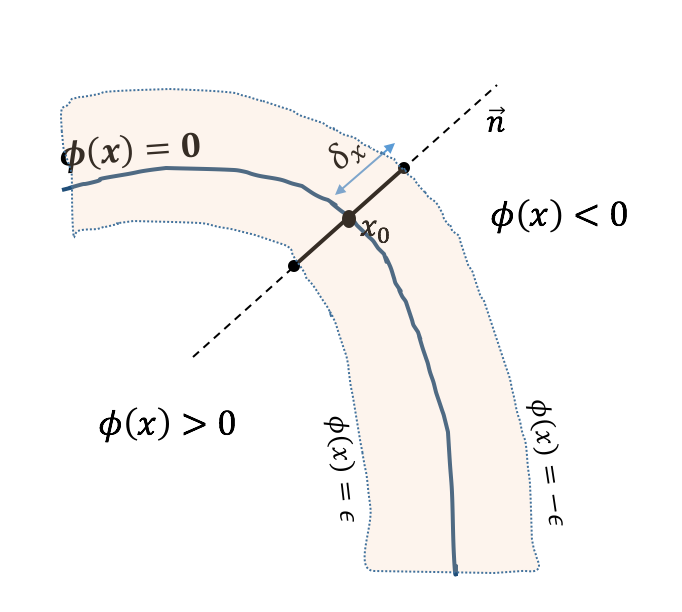}\\
(a) & (b) & (c)
\end{tabular}
\caption{New formulation for approximating the Heaviside function. the Heaviside functions (a) and corresponding Dirac-Delta functions (b) with $\epsilon = 1$ and $\mu = 0.2$ . Global approximation is constructed from inverse tangent function, while compact one is composed of linear and sinusoid functions. (c) level-set boundary (\textit{orange} region) around zero level-set denoted by \textit{blue} line, $n$ represents the normal direction at $\mathbf{x}_0$.}
\label{fig:heavi}
\end{figure}

\section{Joint reconstruction algorithm}
\label{section:jointrec}
Reconstructing both the shape and the background parameter can be cast as a bi-level optimization problem
\begin{equation}
\min_{\mathbf{u}_0, \boldsymbol{\alpha} } \left\lbrace f(\boldsymbol{\alpha},\boldsymbol{u}_0) := \tfrac{1}{2} \| W[(1-h(A\boldsymbol{\alpha}) \mathbf{u}_0 + h(A\boldsymbol{\alpha})u_1 ] - \mathbf{p} \|_2^2 + \tfrac{\lambda}{2} \|L \mathbf{u}_0 \|_2^2 \right\rbrace,
\label{eq:bilevel}
\end{equation}
where $L$ is of form $[L_x^T \quad L_y^T]^T$. $L_x$ and $L_y$ are the second-order finite-difference operators in $x$ and $y$ directions respectively. This optimization problem is \emph{separable}; it is quadratic in $\mathbf{u}_0$ and non-linear in $\boldsymbol{\alpha}$. In order to exploit the fact that the problem has a closed-form solution in $\mathbf{u}_0$ for each $\boldsymbol{\alpha}$, we introduce a reduced objective
\[
\overline{f}(\boldsymbol{\alpha}) = \min_{\mathbf{u}_0} f(\boldsymbol{\alpha},\mathbf{u}_0).
\]
The gradient and Hessian of this reduced objective are given by
\begin{eqnarray}
\nabla \overline{f}(\boldsymbol{\alpha}) &=& \nabla_{\boldsymbol{\alpha}} f(\boldsymbol{\alpha},\overline{\mathbf{u}}_0),\\
\nabla^2 \overline{f}(\boldsymbol{\alpha}) &=& \nabla^2_{\boldsymbol{\alpha}}f - \nabla^2_{\boldsymbol{\alpha},\mathbf{u}_0}f\left(\nabla^2_{\mathbf{u}_0} f\right)^{-1}\nabla^2_{\boldsymbol{\alpha},\mathbf{u}_0}f,
\end{eqnarray}
where $\overline{\mathbf{u}}_0 = \argmin_{\mathbf{u}_0} f(\boldsymbol{\alpha},\mathbf{u}_0)$ \cite{aravkin2012estimating}.

Using a modified Gauss-Newton algorithm to find a minimizer of $\overline{f}$, leads to the following alternating algorithm
\begin{eqnarray}
\mathbf{u}_0^{(k+1)} &=& \underset{\mathbf{u}_0}{\text{arg min}} f(\boldsymbol{\alpha}^{(k)},\mathbf{u}_0) \label{eq:computeu0} \\
\boldsymbol{\alpha}^{(k+1)} &=& \boldsymbol{\alpha}^{(k)} - \mu^{(k)}\left(H_{GN}(f(\boldsymbol{\alpha}^{(k)}))\right)^{-1}\nabla_{\boldsymbol{\alpha}}f(\boldsymbol{\alpha}^{(k)},\mathbf{u}_0^{(k+1)}), \label{eq:computealpha}
\end{eqnarray}
where the expressions for the gradient and Gauss-Newton Hessian are given by \eqref{eq:PLSgH}. 
Convergence of this alternating approach to a local minimum of \eqref{eq:bilevel} is guaranteed as long as the step-length satisfies the strong Wolfe conditions \cite{wright1999numerical}.

The reconstruction algorithm based on this iterative scheme is presented in Algorithm \ref{alg:joint}.

\begin{algorithm}[H]
 \caption{Joint Reconstruction Algorithm}
 \label{alg:joint}
 \begin{algorithmic}[1]
 \REQUIRE
 $\mathbf{p}$ - data, $W$ - forward modeling operator, $u_1$ - anomaly property, $A$ - RBF Kernel matrix, $\boldsymbol{\alpha}_0$ - initial RBF weights, $\kappa$ - Heaviside width parameter, $\mu$ - Heaviside inclination parameter
 \ENSURE $\boldsymbol{\alpha}_{K-1}$ - final weights, $\mathbf{u}$ - corresponding model
 \FOR {$k = 0$ to $K-1$ }
 \STATE compute Heaviside $\epsilon$ from equation~\eqref{eq:epsilon}
 \STATE compute background parameter $\mathbf{u}_0^{(k+1)}$ by solving equation~\eqref{eq:computeu0}  \label{alg:joint:backg}
 \STATE compute level-set parameter $\boldsymbol{\alpha}^{(k+1)}$ from equation~\eqref{eq:computealpha} \label{alg:joint:alpha}
 \ENDFOR
 \STATE compute $\mathbf{u}$ from equation~\eqref{eq:getufrompls}.
 \end{algorithmic}
\end{algorithm}
\vspace{-5mm}
We use the LSQR method in step~\ref{alg:joint:backg}, with pre-defined maximum iterations and a tolerance value. A trust-region method is applied to compute $\boldsymbol{\alpha}^{(k+1)}$ in step~\ref{alg:joint:alpha} restricting the conjugate gradient to \textit{only} 10 iterations.

\section{Numerical Experiments}
\label{section:results}
The numerical experiments are performed on 4 phantoms shown in figure~\ref{fig:phantoms}. Each phantom has a constant gray value of parameter 1. For the first two phantoms, the background  varies from 0 to 0.5, while for the next two, it varies from 0 to 0.8. In order to avoid \textit{inverse crime}, the data is generated using a line Kernel, and the forward model uses a Joseph kernel. We use ASTRA toolbox to compute the forward and backward projections \cite{bleichrodt2016easy}.  First, we show the results on the noiseless full-view data and later we compare various methods to proposed method in limited-data case with additive gaussian noise of 10~dB SNR.

\begin{figure}[!ht]
\centering
\renewcommand{\arraystretch}{1.5}
\begin{tabular}{cccc} 
\includegraphics[scale=0.25]{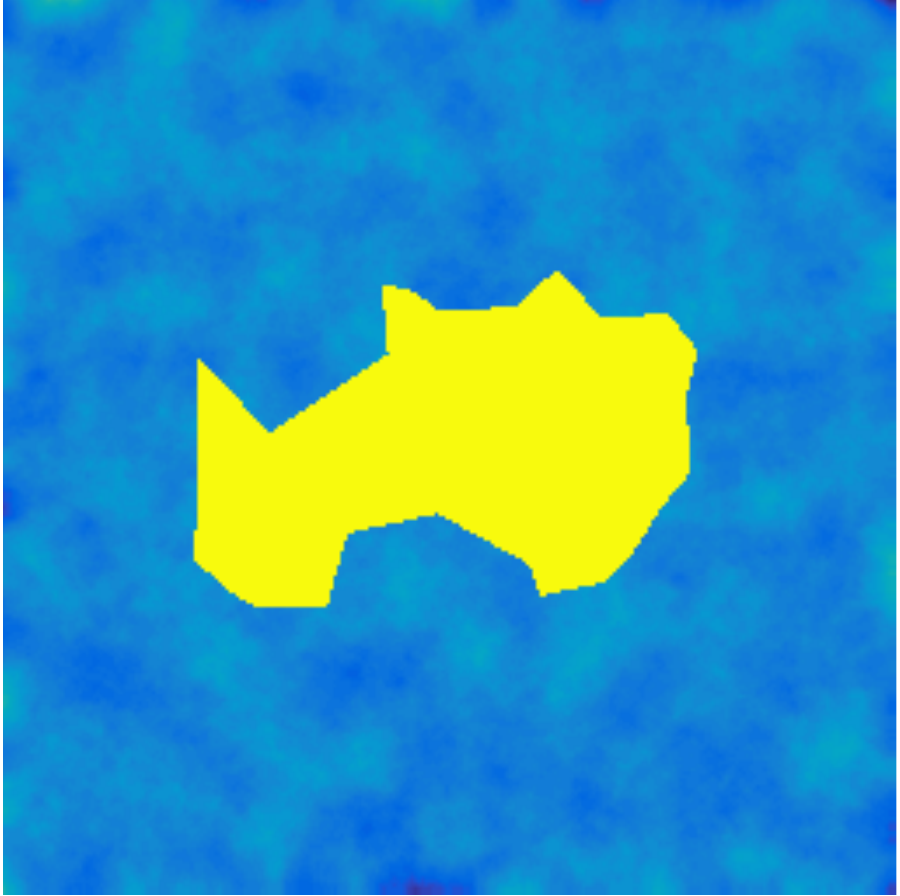} & \includegraphics[scale=0.25]{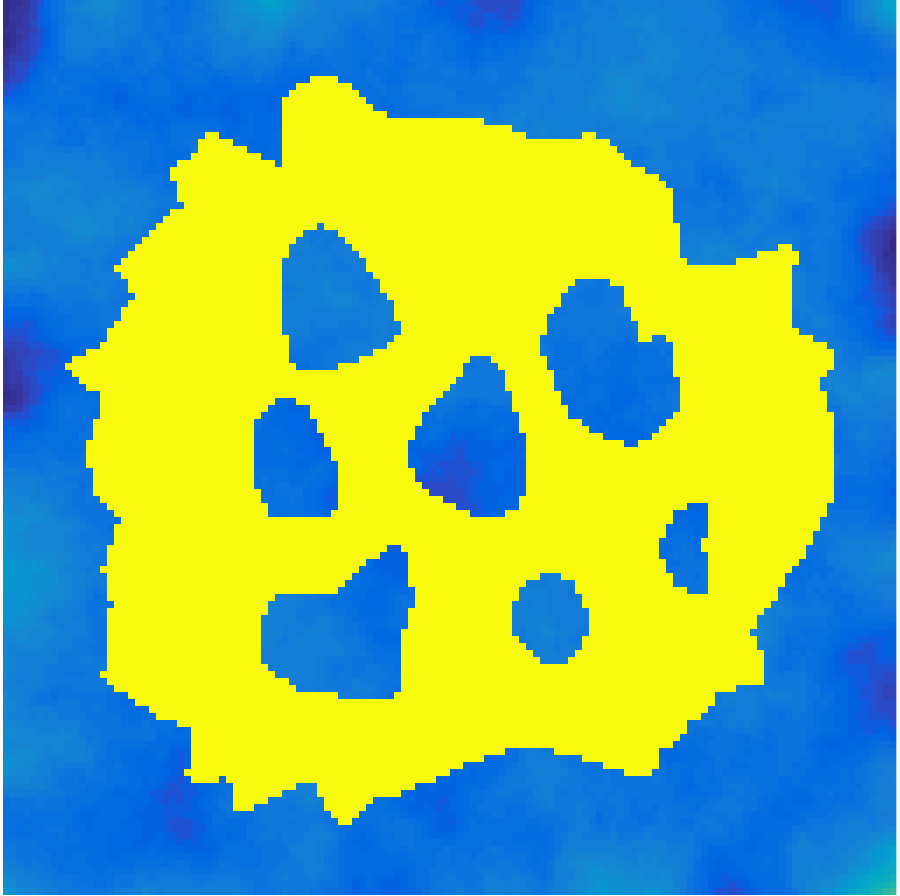} & \includegraphics[scale=0.25]{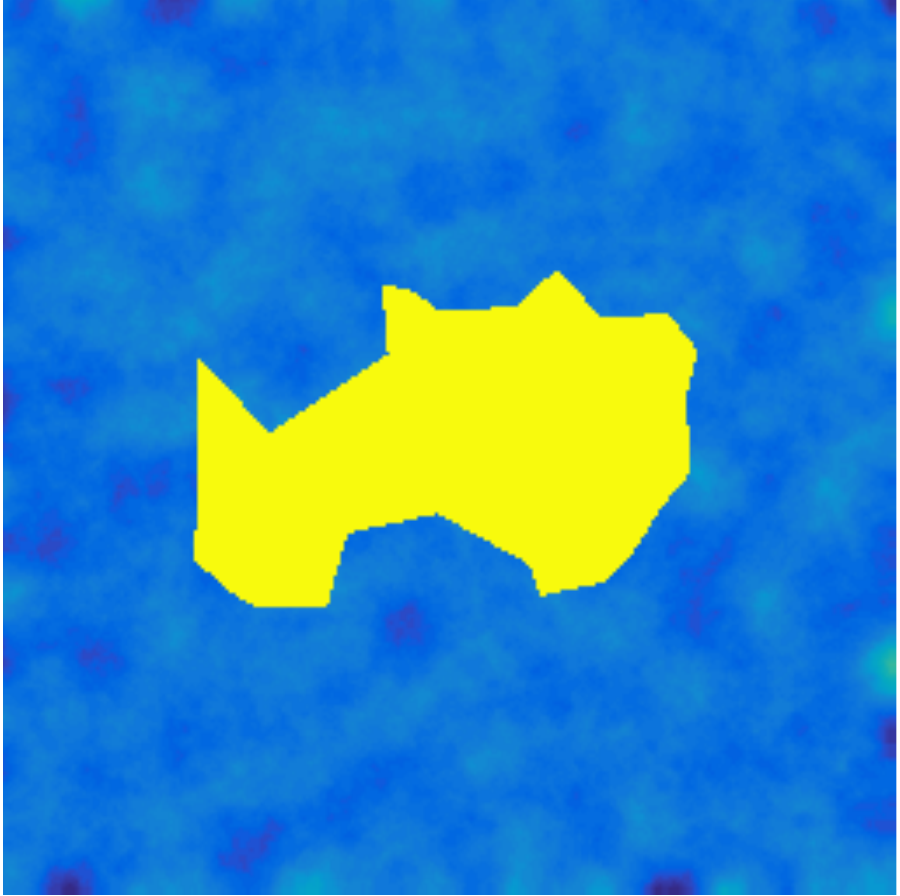} &
\includegraphics[scale=0.25]{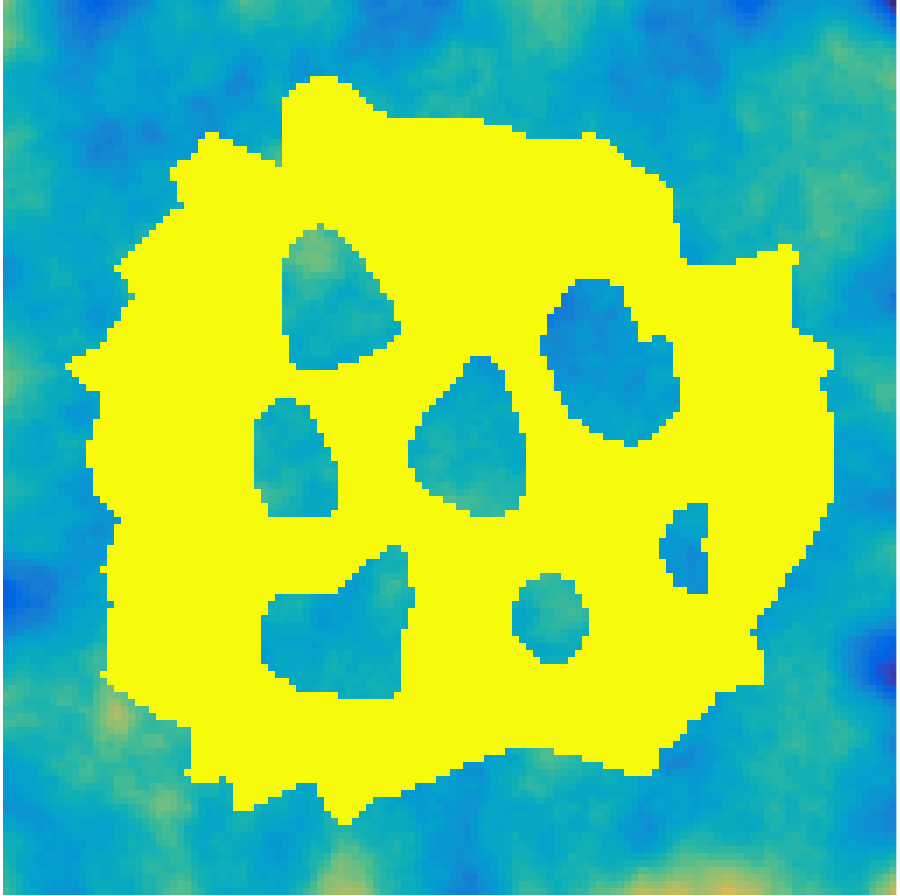} \\
(a) Model A &  (b) Model B &  (c) Model C & (d) Model D
\end{tabular}
\caption{Phantoms for Simulations. All the models have resolution of $256 \times 256$ pixels.}
\label{fig:phantoms}
\end{figure}

For the parametric level-set method, we use compactly supported radial basis functions. The basis functions has the form given below:
\[ \Psi(r) = (1 - r)_+^8 (32r^3 + 25r^2 + 8r + 1).\]

RBF nodes are placed on a rectangular grid with the gridspacing 5 times the computational (model) gridspacing. The grid extends to two points outside the model grid to compensate for the background effects. The heaviside width parameter $\kappa$ is set to be 0.01 and the its inclination parameter $\mu$ is set to be 0.1.

The level-set parameter $\boldsymbol{\alpha}$ is optimized using the \textit{fminunc} package (trust-region algorithm) in MATLAB. A total of 50 iterations are performed for predicting the $\boldsymbol{\alpha}$, while 200 iterations are performed for predicting $\mathbf{u}_0(x)$ using LSQR at each step.

\subsection{Regularization parameter selection}
The reconstruction with the proposed algorithm is influenced by the parameter for Tikhonov regularization. In general, there are various strategies to choose this parameter, e.g., \cite{thompson1991study}. As our problem formulation deals with the non-linearity in the level-set parameter, application of these kinds of strategies is not clear. Instead we analyze the various residuals, introduced below, with respect to the regularization parameter.

We define three measures (all in the least-squares sense) to quantify the residuals: 1) data residual (DR), determines the data fit between the true data and reconstructed data, 2) model residual (MR), determines the fit between reconstructed model and true model, 3) shape residual (SR), determines the fit between the reconstructed and true anomaly shape. In practice, one can only have a data residual measure to figure out the regularization parameter $\lambda$. From Figure~\ref{fig:regParam}, it is evident that there exists a sufficient region of $\lambda$ for which the reconstructions almost stays constant. This region is easily identifiable from the data residual plot for various $\lambda$.  

\begin{figure}[!ht]
\centering
\renewcommand{\arraystretch}{1}
\begin{tabular}{c}
(a) \\
\includegraphics[width=0.54\textwidth]{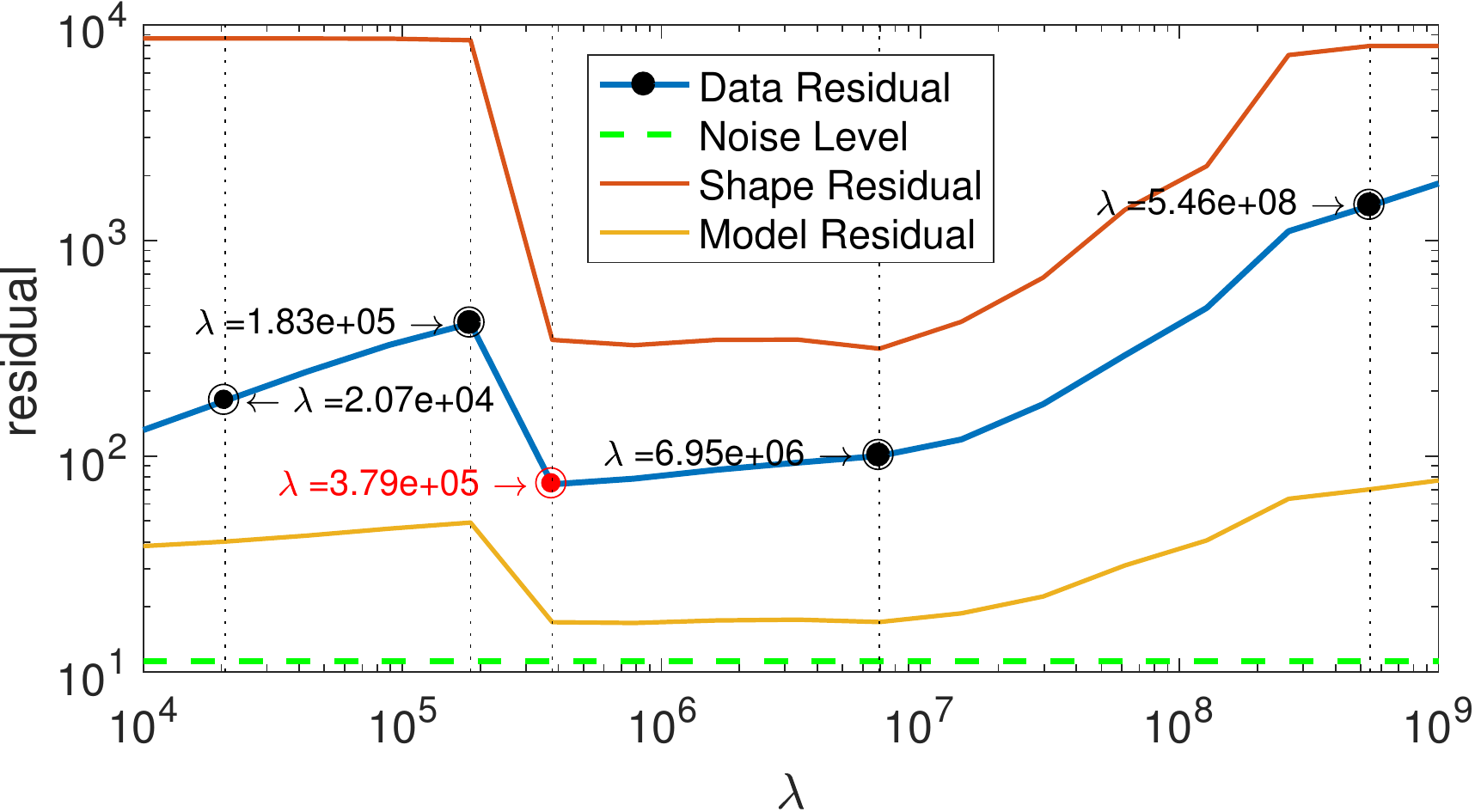}
\end{tabular}
\begin{tabular}{c c}
(b)$\lambda = 1.83 \times 10^5$ & (c)$\lambda = 3.79 \times 10^5$\\ 
\includegraphics[width=0.15\textwidth]{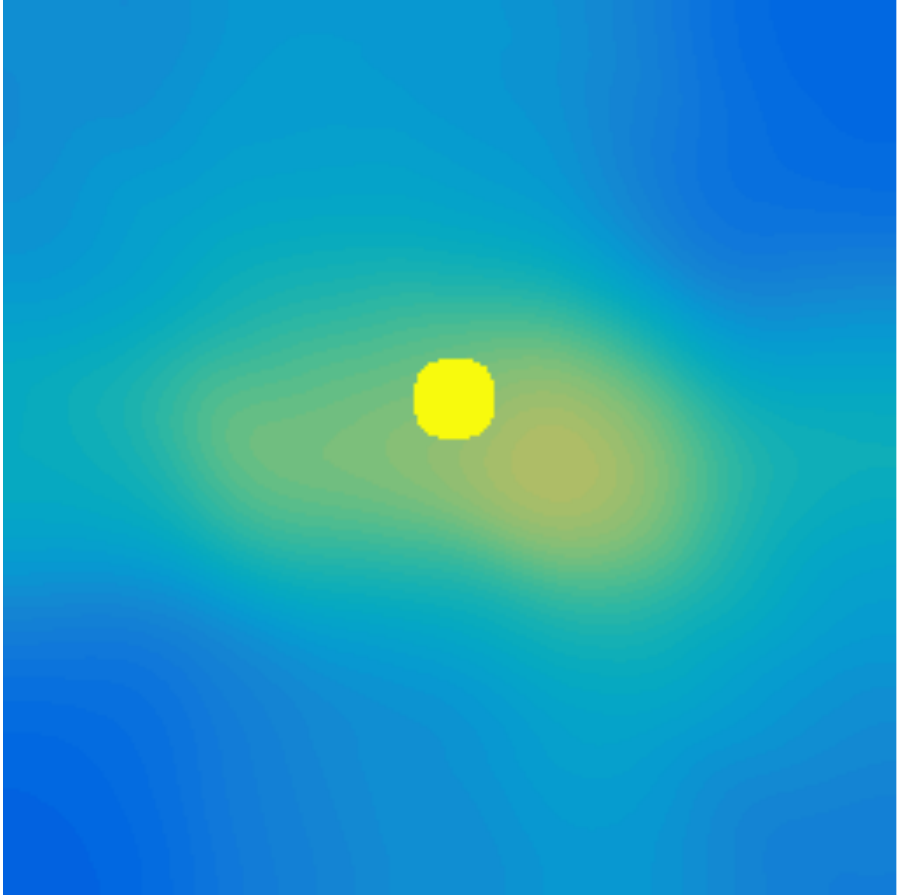} & \includegraphics[width=0.15\textwidth]{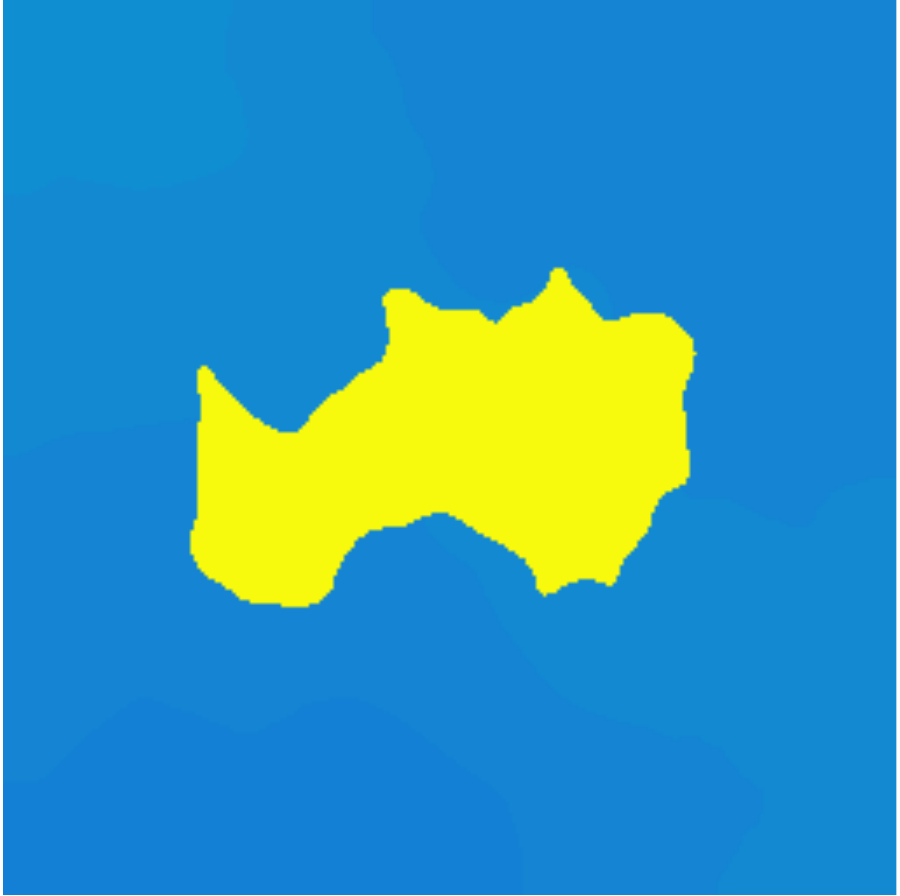} \\
(d) $\lambda = 6.95 \times 10^6$ & (e) $\lambda = 5.46 \times 10^8$ \\
\includegraphics[width=0.15\textwidth]{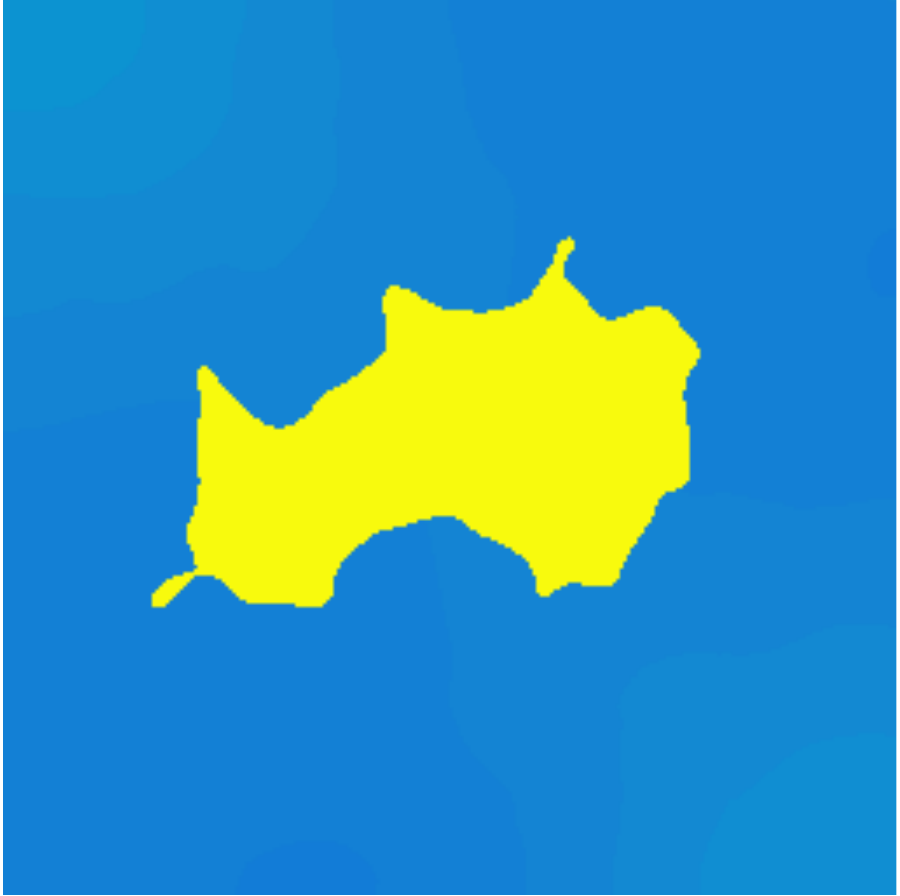} & \includegraphics[width=0.15\textwidth]{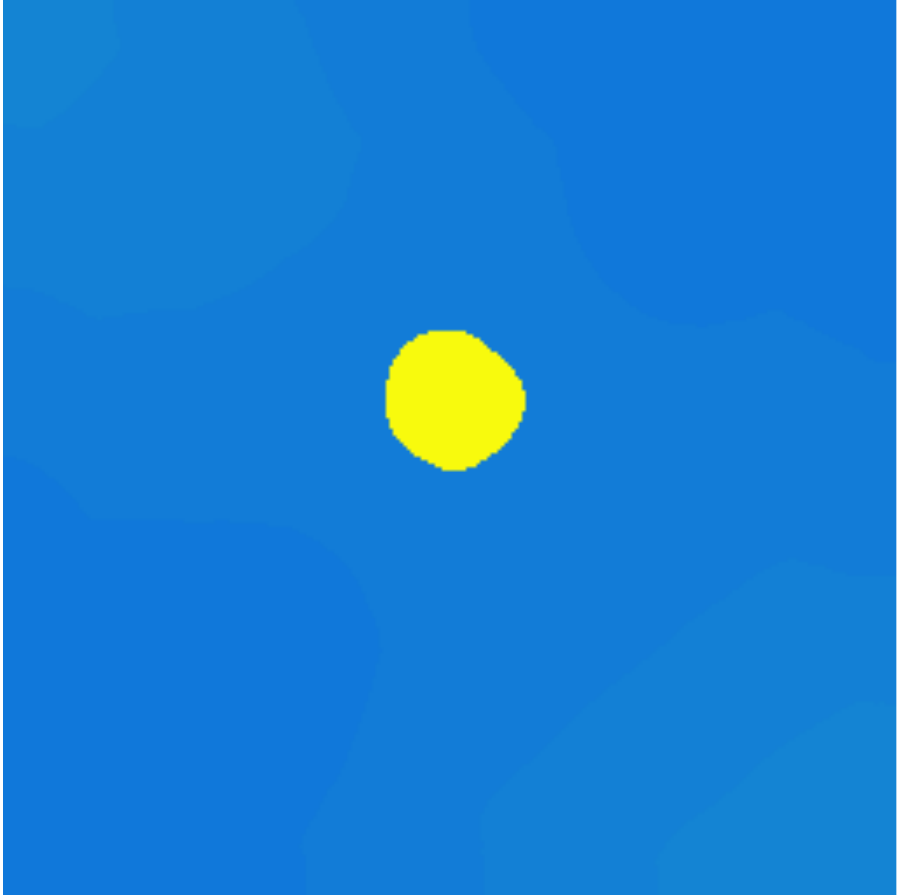}
\end{tabular}
\caption{Variation of residuals with regularization parameter for Tikhonov. Appropriate region for chosing $\lambda$ exists between $3.79 \times 10^5$ and $6.95 \times 10^6$. (a) behavior of DR, MR and SR over $\lambda$ for model A with noisy limited-angle data. Noise amplitude is denoted by \textit{green dotted line}. (b),(c),(d),(e) shows reconstructions for various $\lambda$ values}
\label{fig:regParam}
\end{figure}

\subsection{Benchmark test}
For the full-view (benchmark) case, the projection data is generated on $256 \times 256$ grid with 256 detectors and 180 projections with $0 \leq  \theta \leq \pi$. The noise is assumed to be zero in this case.
The results on the phantoms with the full-view data are shown in Figure~\ref{fig:benchmark}. Anomaly geometries in all of these models are reconstructed \textit{almost} perfectly with the proposed method, although the background has been smoothened out with the tikhonov regularization.
\begin{figure}[!ht]
\centering
\renewcommand{\arraystretch}{1}
\begin{tabular}{ccccccc} 
\includegraphics[scale=0.25]{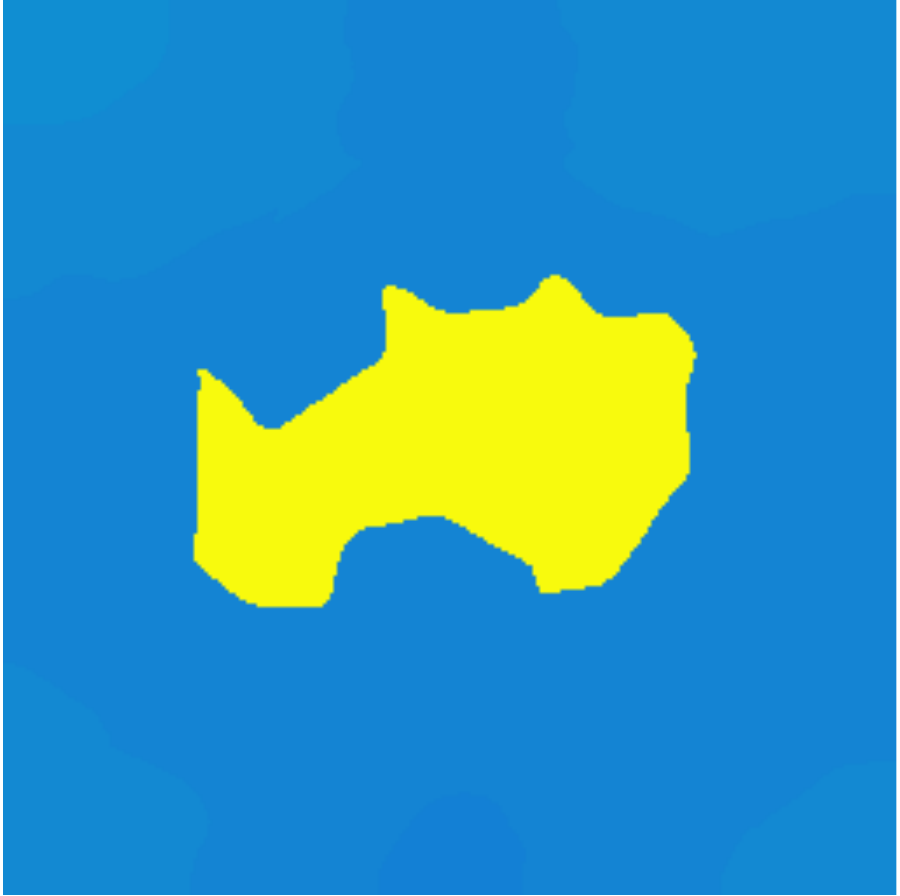} & & \includegraphics[scale=0.25]{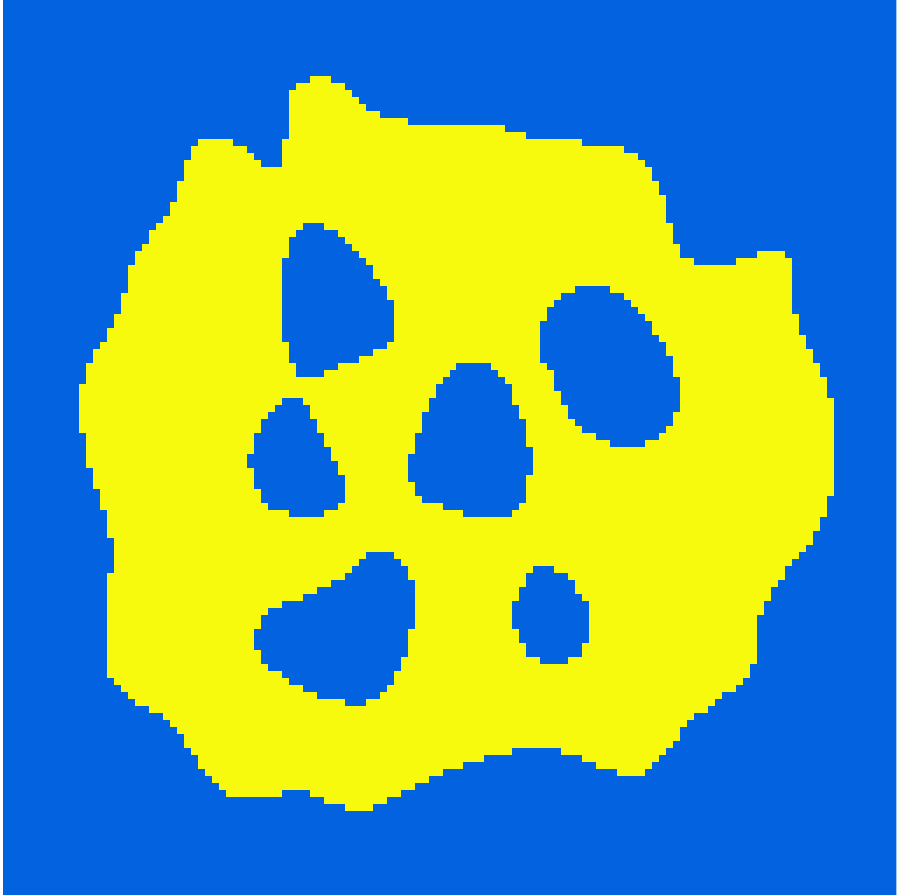} & & \includegraphics[scale=0.25]{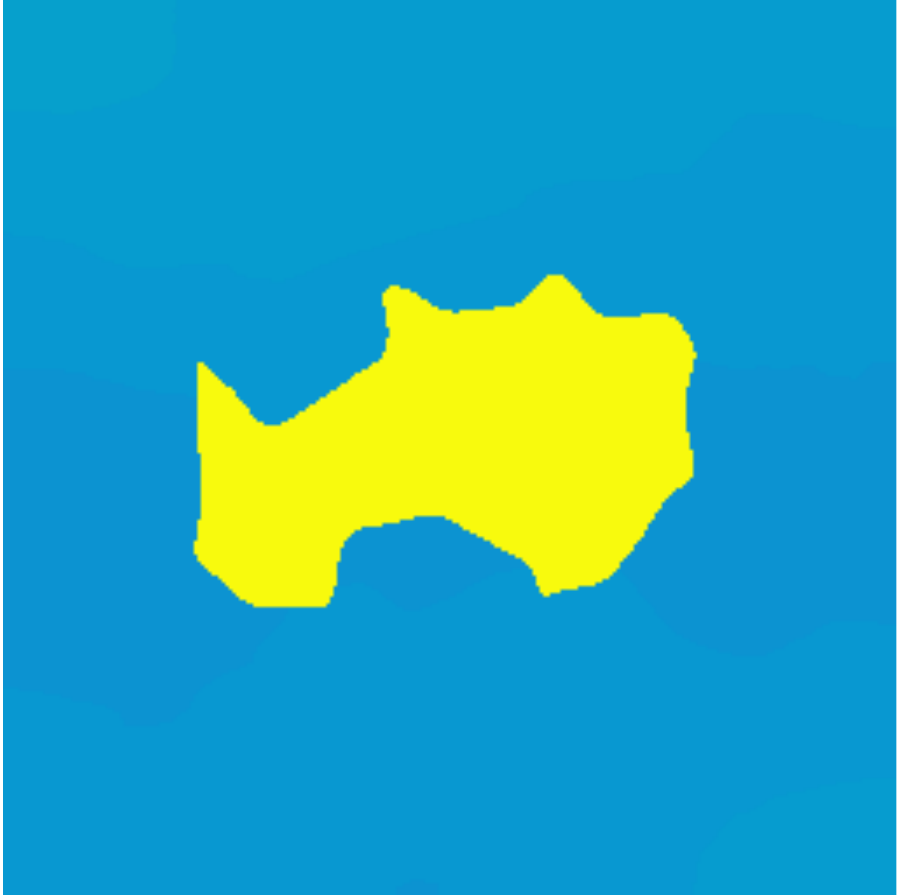} & &\includegraphics[scale=0.25]{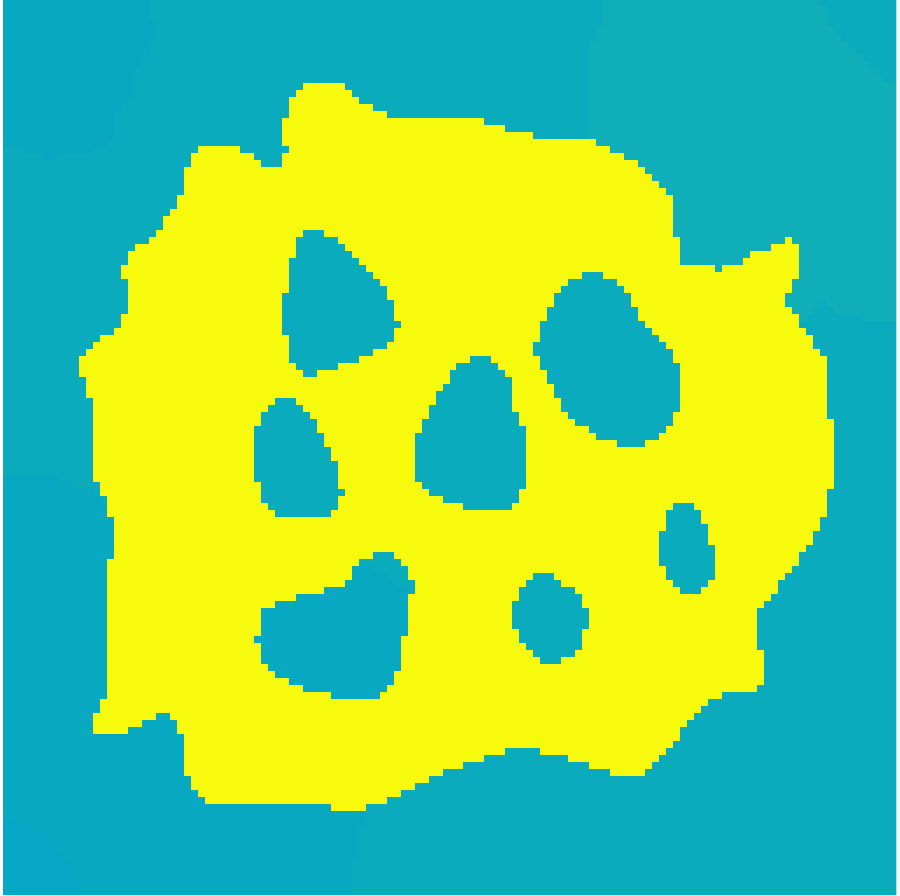} \\
 ($\lambda = 2.97 \times 10^7$) & & ($ \lambda =1.13 \times 10^9$) & & ($\lambda = 2.97 \times 10^7$) & & ($\lambda =1.27 \times 10^8$)
\end{tabular}
\caption{Benchmark Tests: Reconstructions with full-view noiseless data for the regularization parameter $\lambda$ shown below it. }
\label{fig:benchmark}
\end{figure}

\subsection{Limited-angle test}
In this case, we use \textit{only} 5 projections with $\theta$ restricted from $0$ to $2\pi/3$. The data is now reduced to almost $3\%$ compared to the benchmark test. We also add Gaussian noise of 10~dB SNR to this synthetic data.
To check the performance of the proposed method, we compare it to Total-variation method \cite{bleichrodt2016easy}, DART \cite{batenburg2011dart} and its modified version for partially discrete tomography, P-DART \cite{roelandts2012accurate}. A total of 200 iterations were performed with regularization parameter determined from shape residual curve. In DART, the background part was modeled using 20 discrete gray-values between its bounds for model A and B, while 30 discrete gray-values for model C and D. 40 DART iterations were perfomed in each case. For P-DART, a total of 150 iterations were performed.

\begin{figure}
\centering
\renewcommand{\arraystretch}{1.5}
\begin{tabular}{ccccc} 
\textbf{Model} & \textbf{Total-Variation} & \textbf{DART} & \textbf{P-DART} & \textbf{Proposed Method} \\
& ($\lambda = 3.36$) & & &($\lambda = 3.793 \times 10^5$) \\
\includegraphics[width=0.15\columnwidth]{model1.pdf} &\includegraphics[width=0.15\columnwidth]{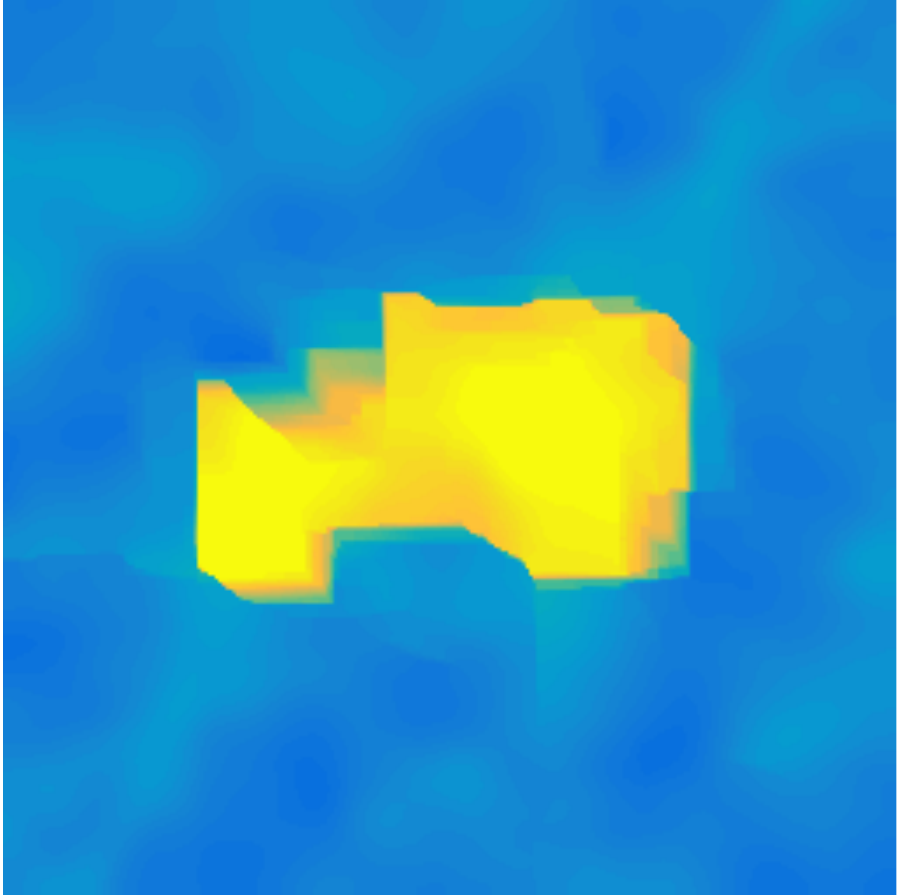} & \includegraphics[width=0.15\columnwidth]{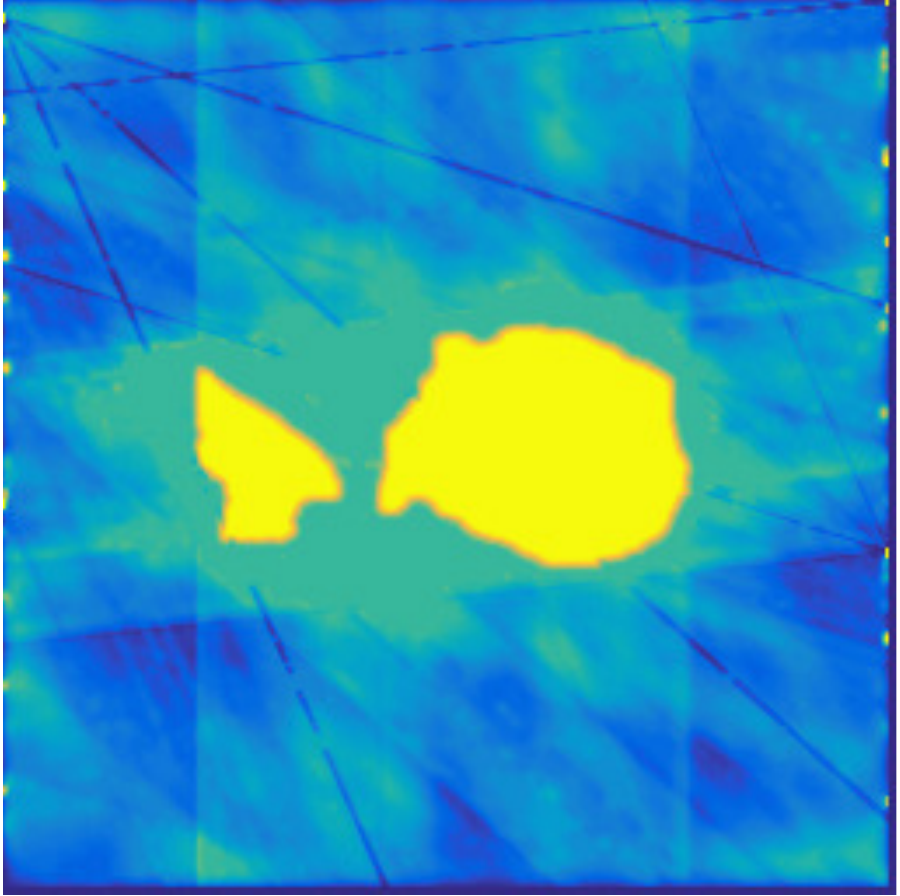} & \includegraphics[width=0.15\columnwidth]{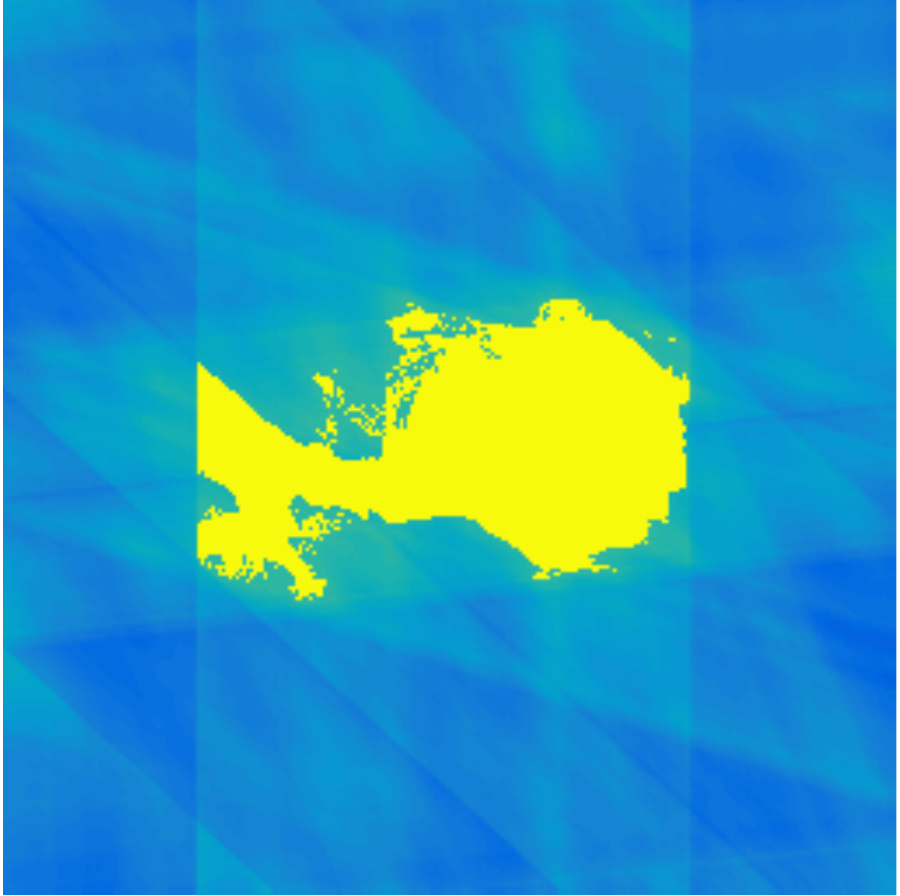} & \includegraphics[width=0.15\columnwidth]{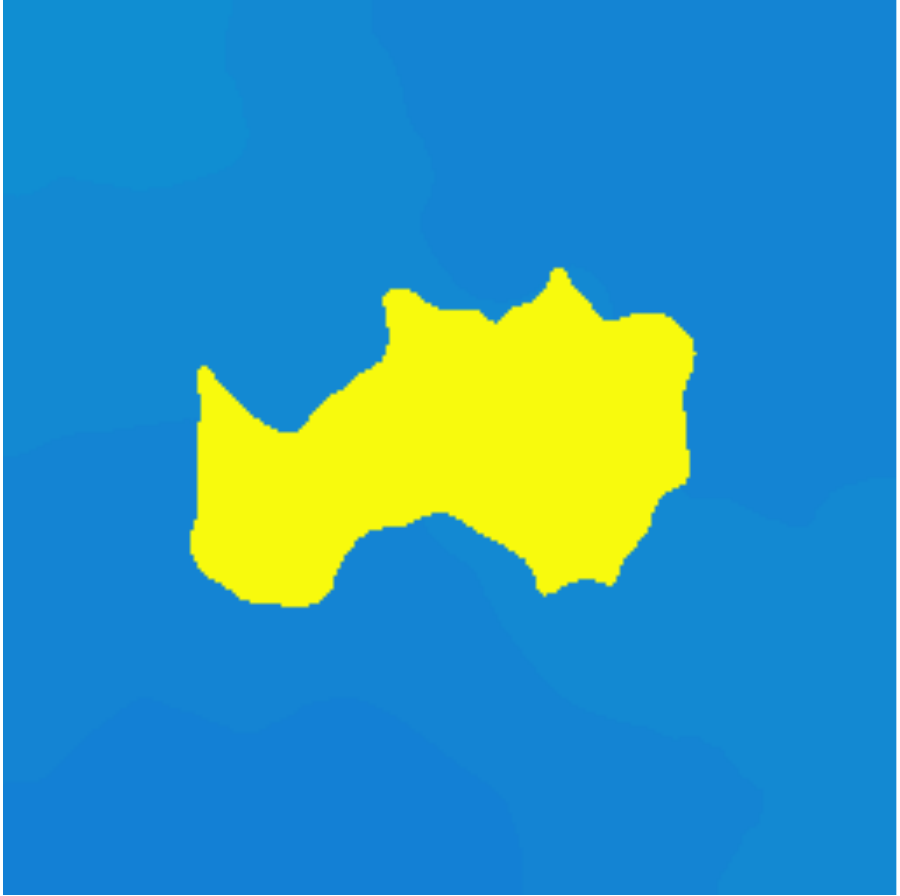} \\
& DR = 59.91 & DR = 119.36 & \textbf{DR = 13.64} & DR = 74.07 \\
& SR = 6904 & SR = 4542 & SR = 2207 & \textbf{SR = 352} \\
& ($\lambda = 1.438$) & & & ($\lambda = 3.793 \times 10^5$) \\
\includegraphics[width=0.15\columnwidth]{model2.pdf}  & \includegraphics[width=0.15\columnwidth]{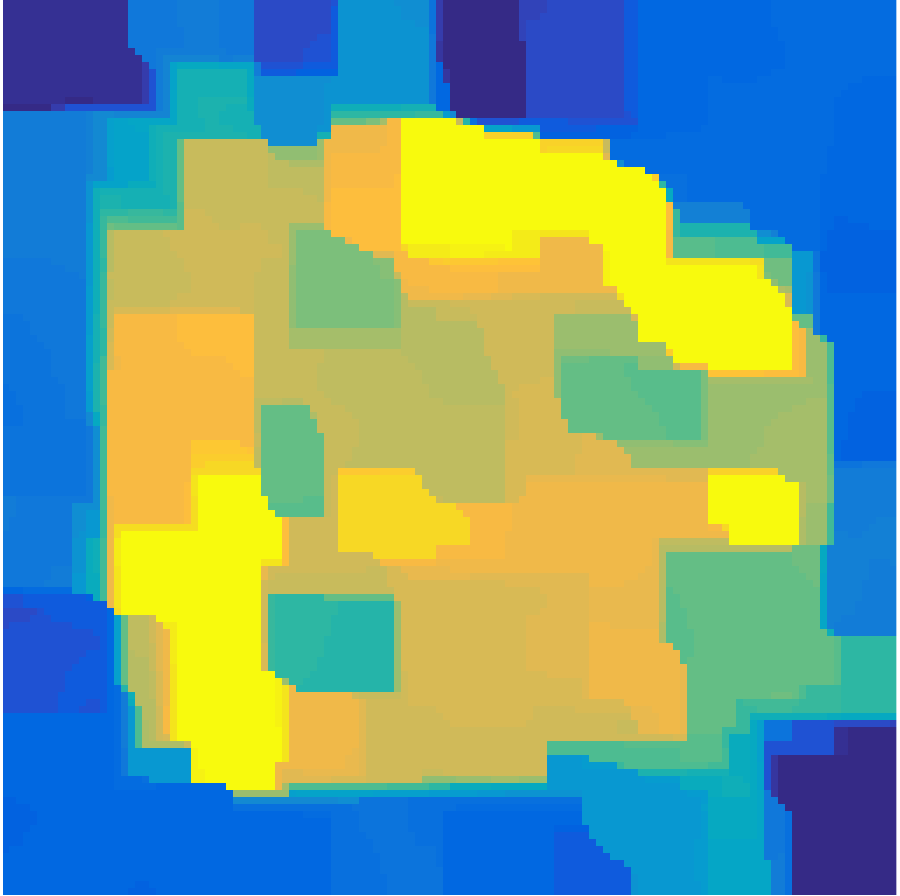} & \includegraphics[width=0.15\columnwidth]{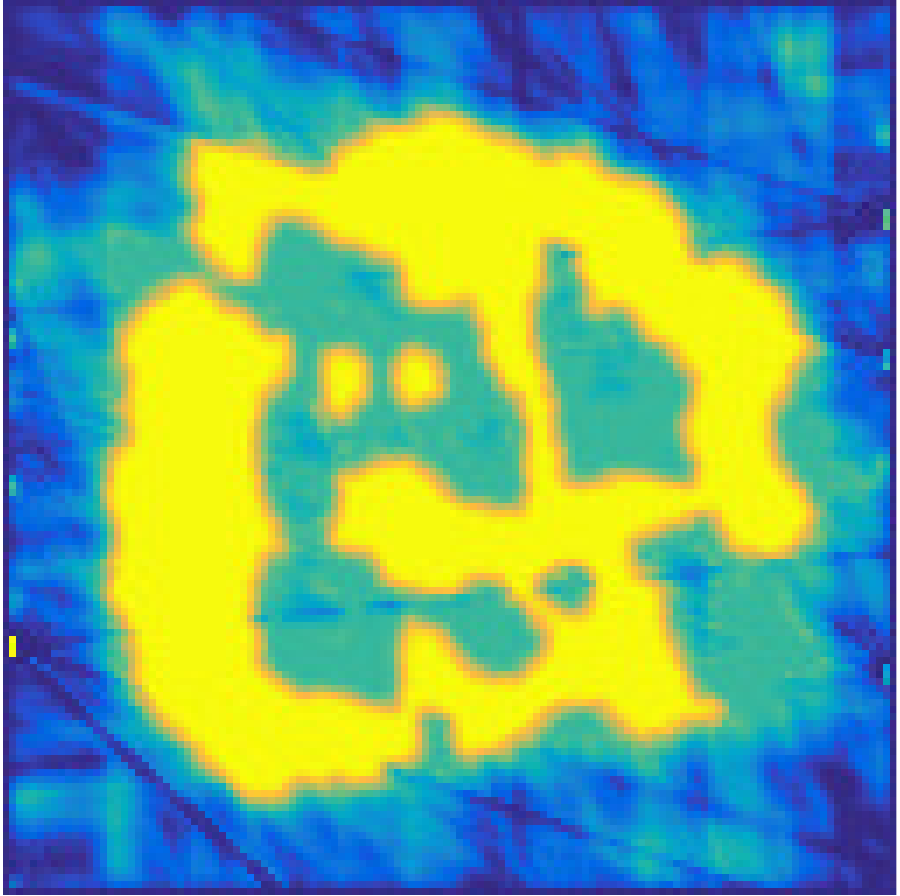} & \includegraphics[width=0.15\columnwidth]{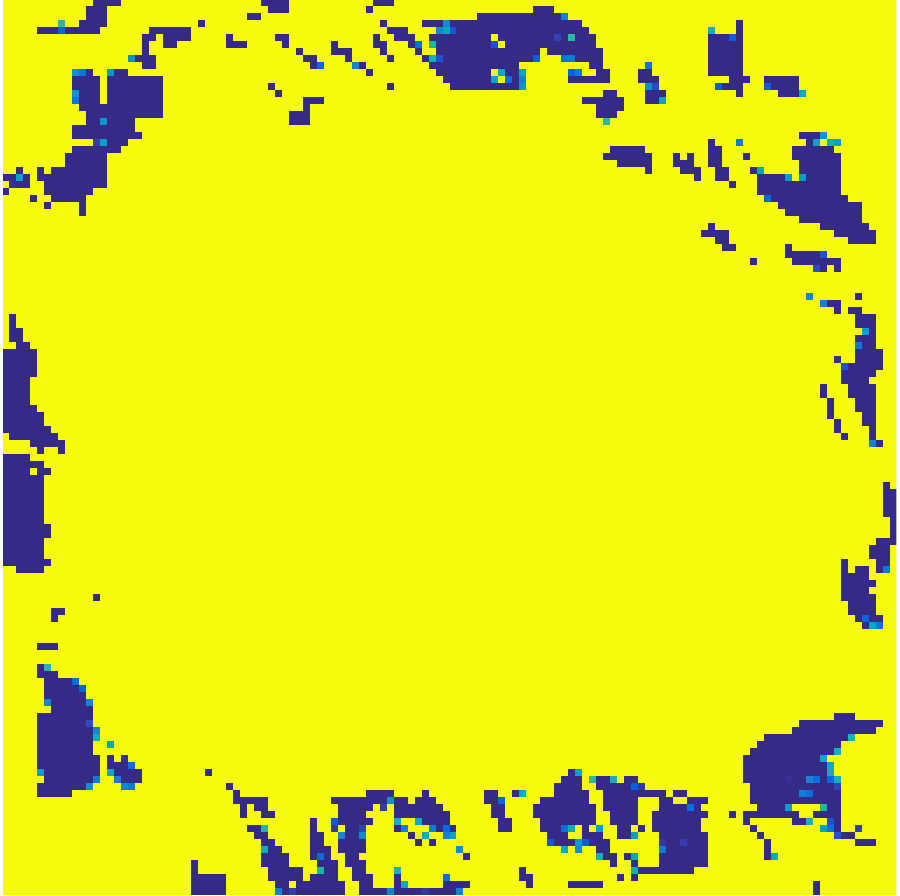} & \includegraphics[width=0.15\columnwidth]{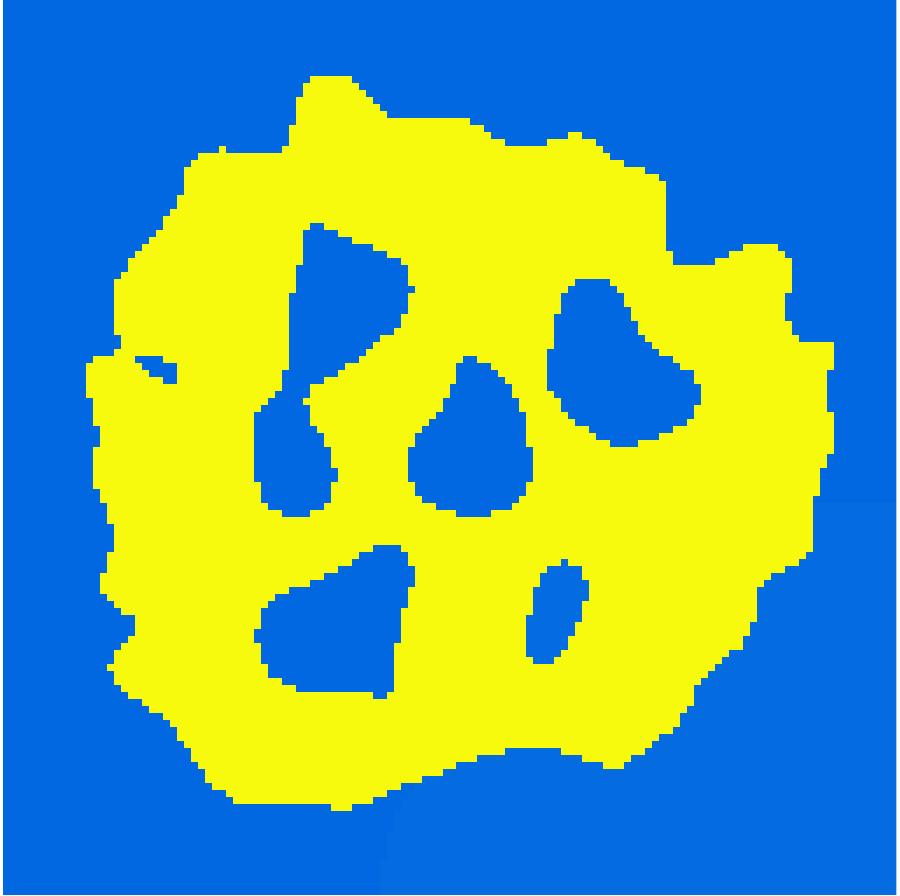} \\
& DR = 39.39 & \textbf{DR = 35.25} & DR = 298.61 & DR = 62.28 \\
& SR = 5366 & SR = 4057 & SR = 7622 & \textbf{SR = 796} \\
& ($\lambda = 3.36$) & & & ($\lambda = 7.438 \times 10^5$) \\
\includegraphics[width=0.15\columnwidth]{model3.pdf} &  \includegraphics[width=0.15\columnwidth]{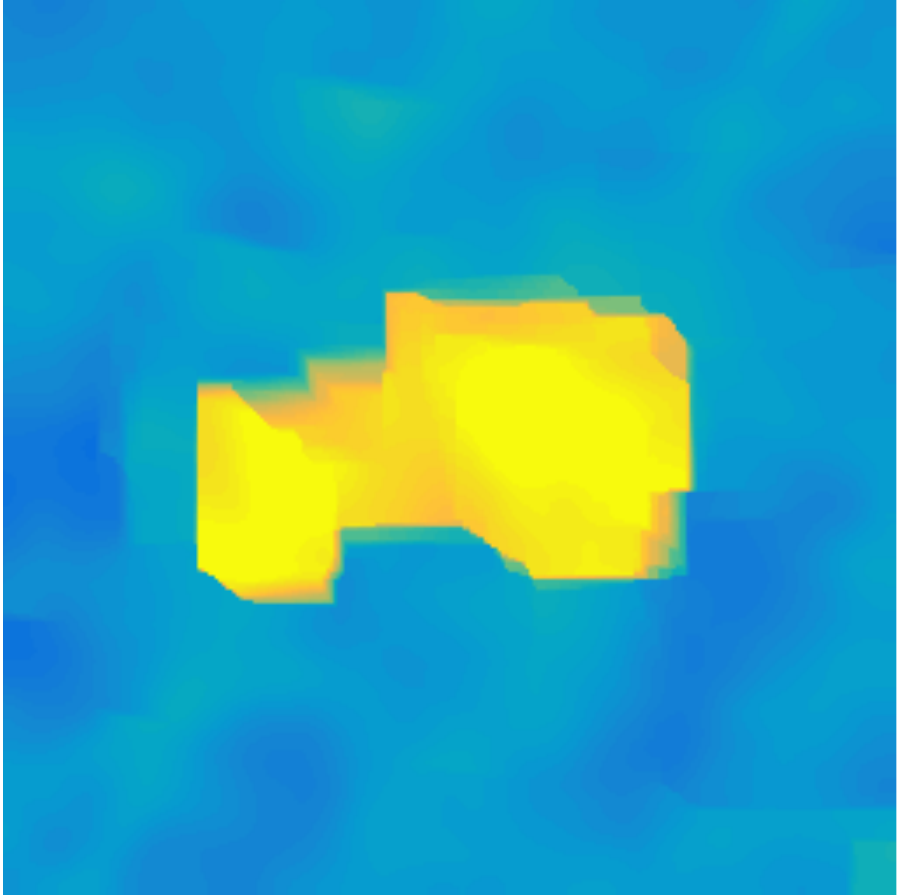} & \includegraphics[width=0.15\columnwidth]{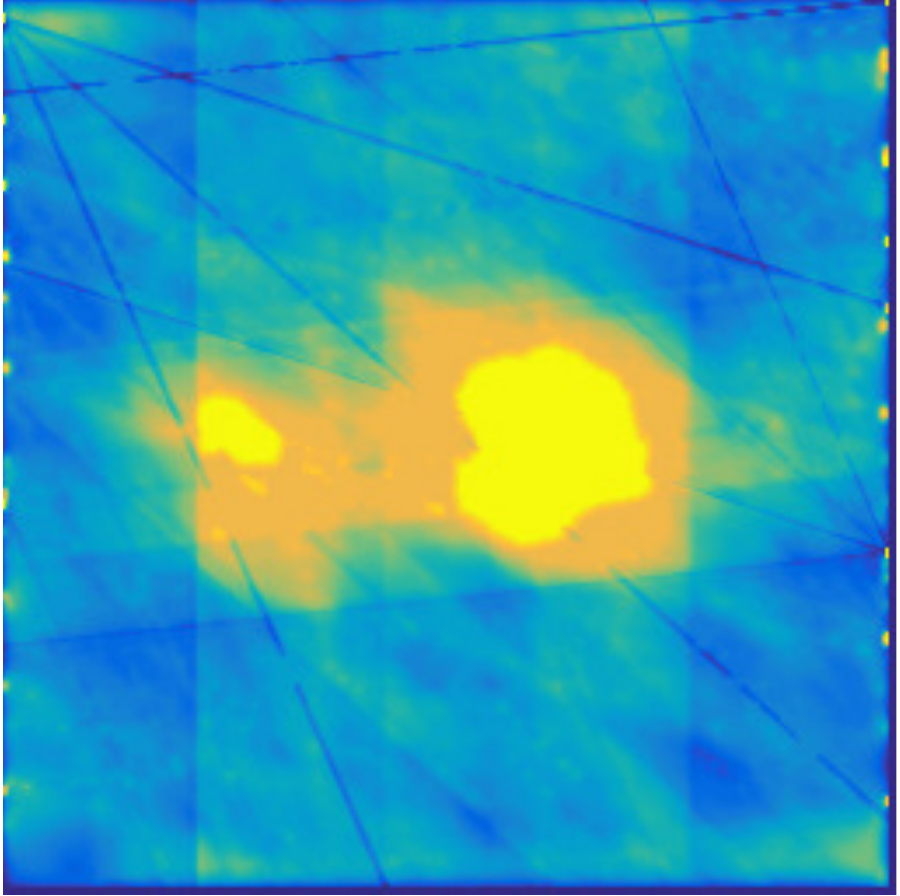} & \includegraphics[width=0.15\columnwidth]{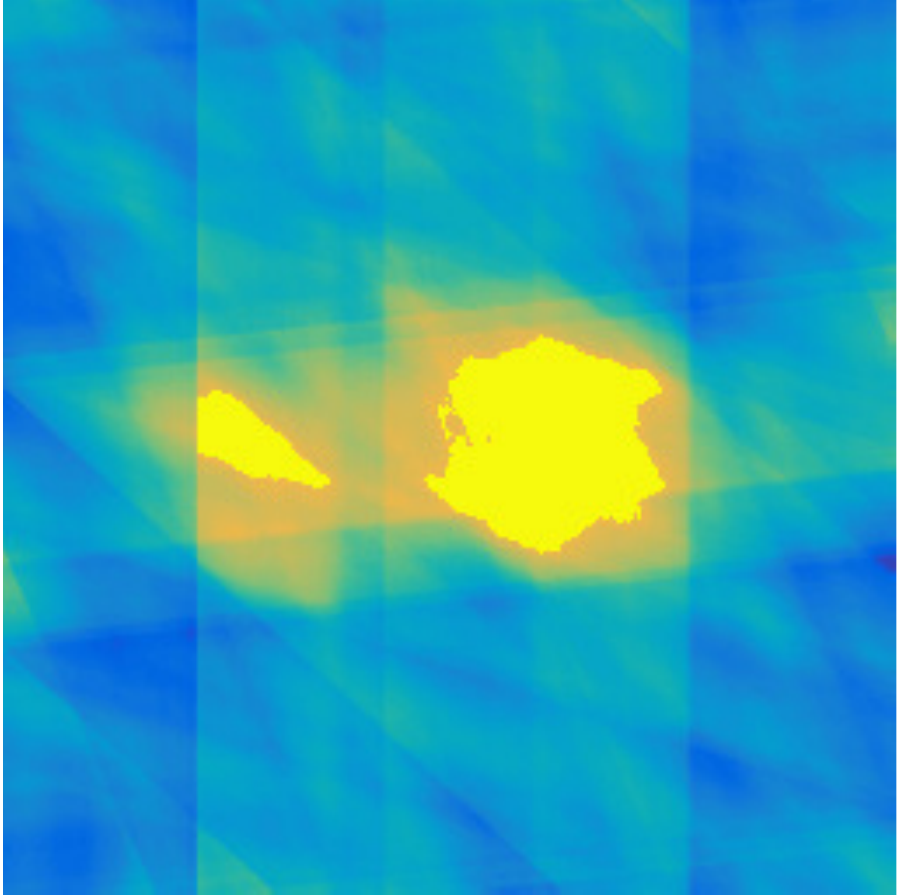} & \includegraphics[width=0.15\columnwidth]{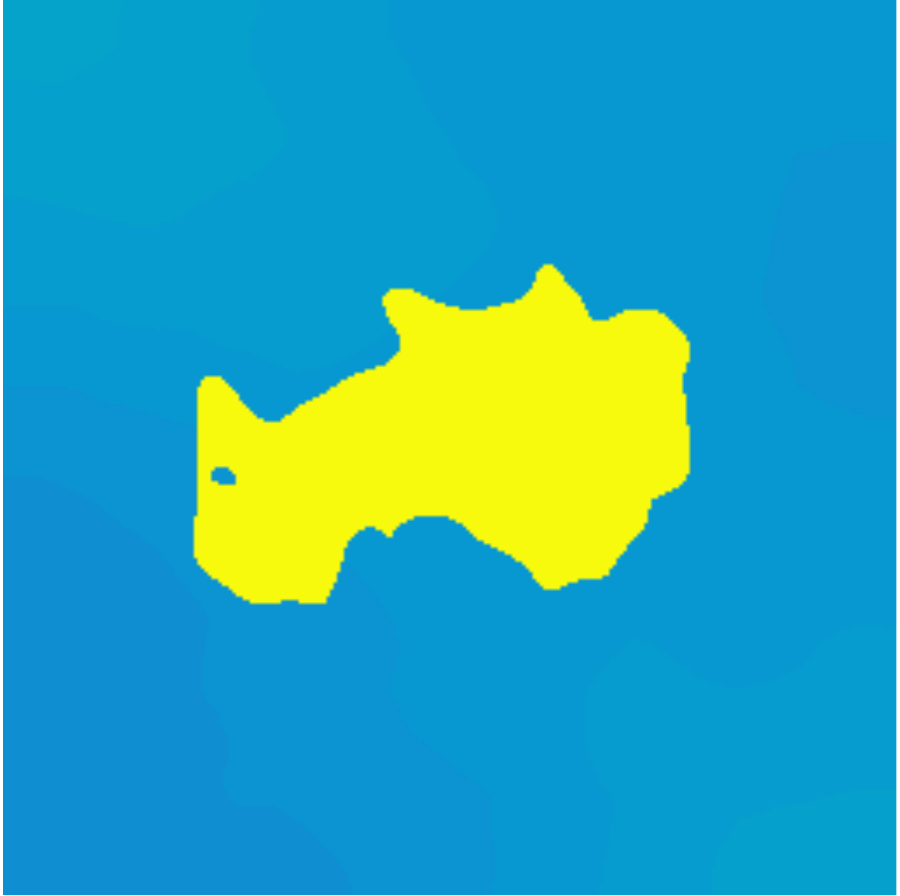} \\
& DR = 70.35 & DR = 134.99 & \textbf{DR = 16.54} & DR = 118.56 \\
& SR = 6805 & SR = 6964 & SR = 5541 & \textbf{SR = 377} \\
& ($\lambda = 0.6158$) & & & ($\lambda = 3.793 \times 10^5$) \\
\includegraphics[width=0.15\columnwidth]{model4.pdf} & \includegraphics[width=0.15\columnwidth]{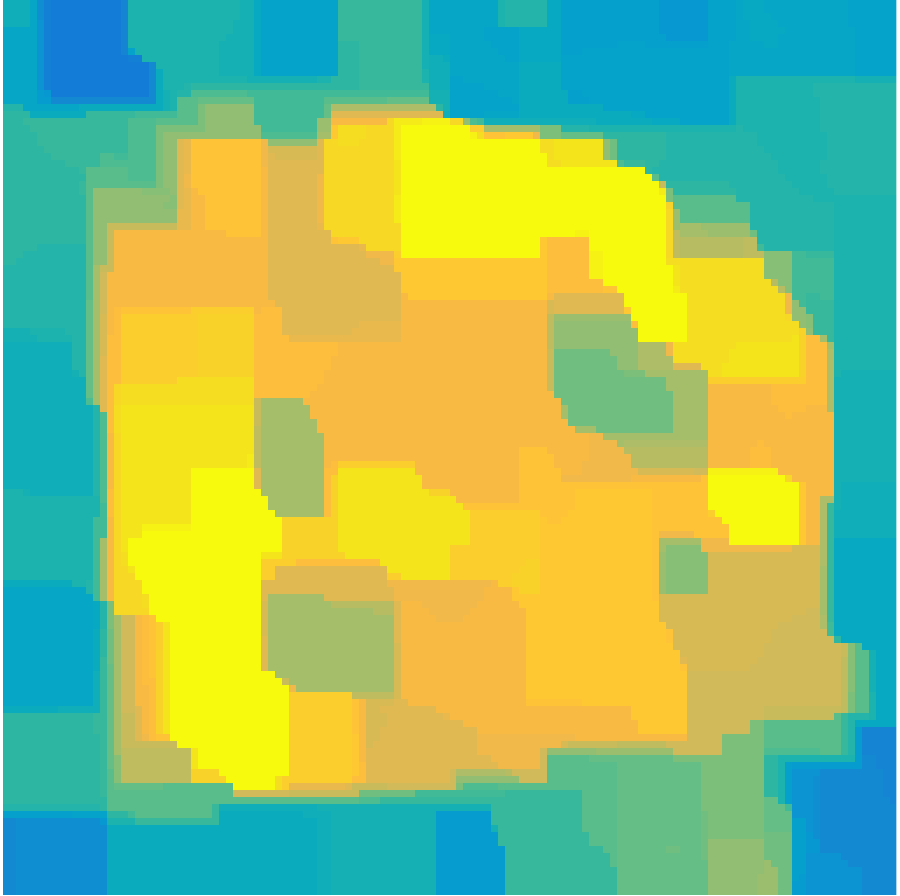} & \includegraphics[width=0.15\columnwidth]{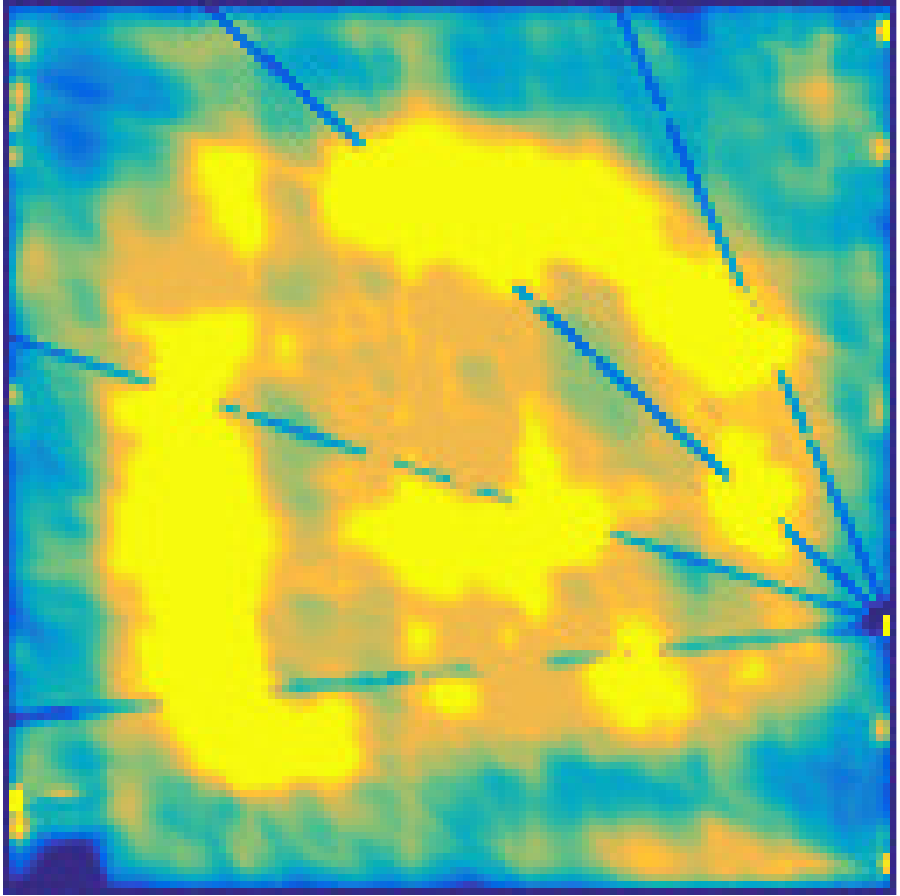} & \includegraphics[width=0.15\columnwidth]{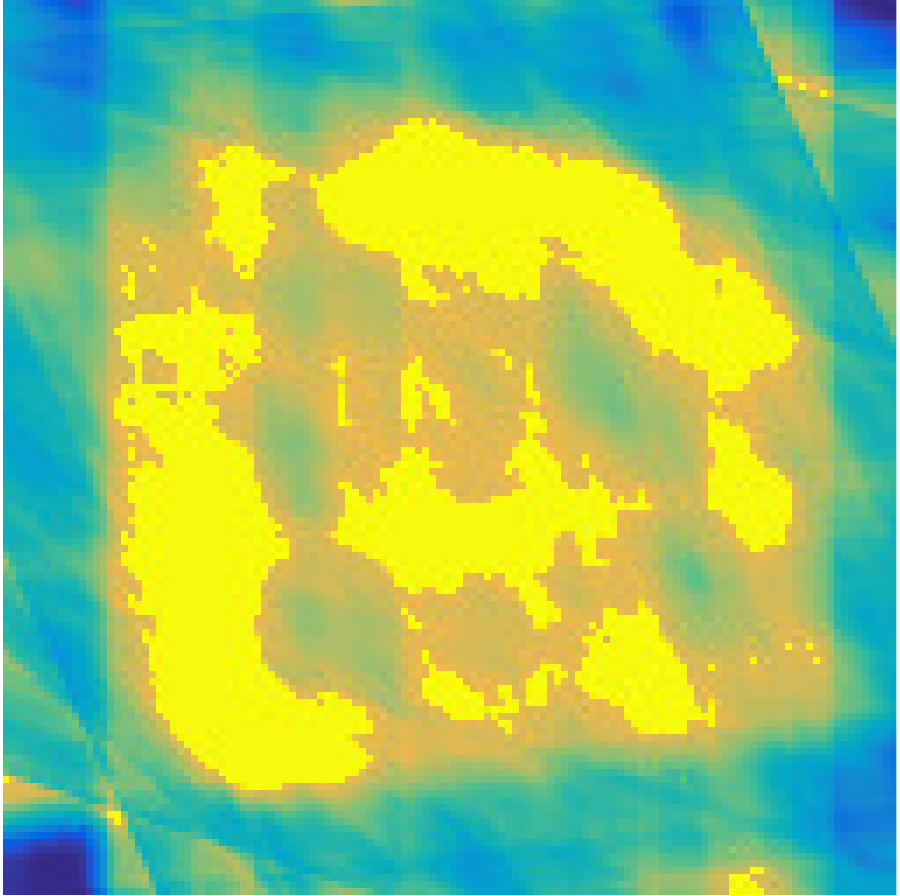} & \includegraphics[width=0.15\columnwidth]{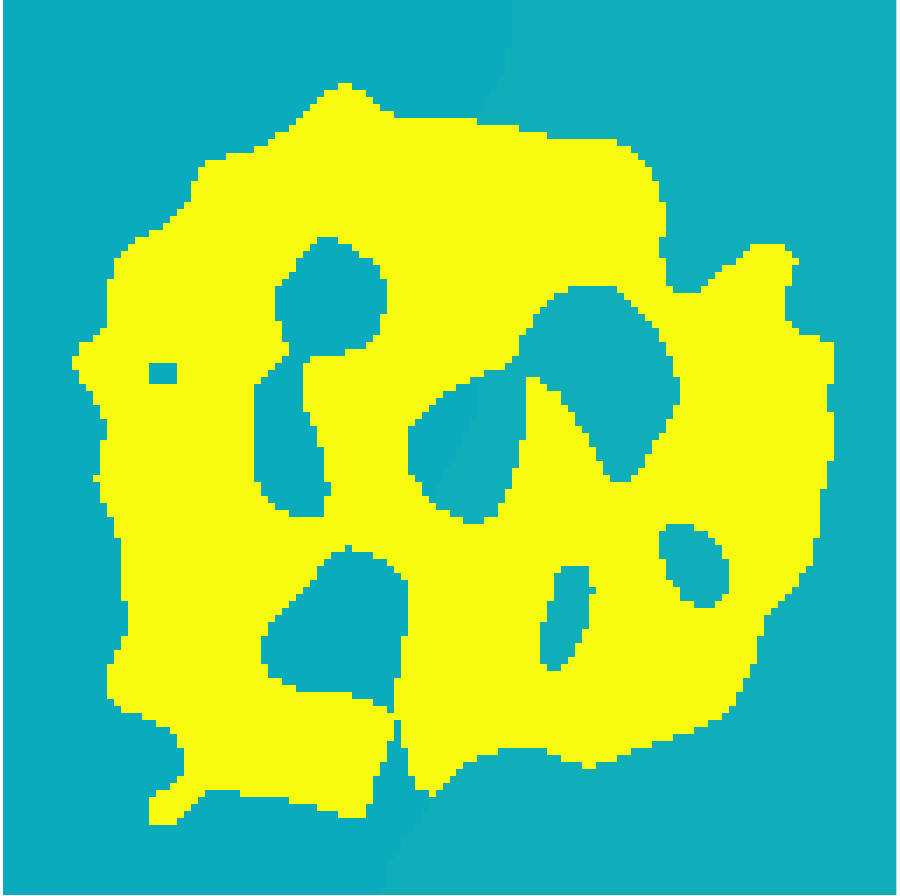} \\
& DR = 21.52 & DR = 96.5216 & \textbf{DR = 9.00} & DR = 59.51 \\
& SR = 5602 & SR = 4766 & SR = 3631 & \textbf{SR = 1304}
\end{tabular}
\caption{Reconstructions with noisy limited data. The first column shows the true models, while the last 4 columns show the reconstructions with various methods. The residuals are also shown below each reconstructed model.}
\label{fig:limited:noise}
\end{figure}

The results on noisy limited-angle with limited data are presented in Figure~\ref{fig:limited:noise}. The proposed method is able to capture most of the fine details (evident from the shape residual) in the phantoms even with the very limited data with moderate noise. The P-DART method achieves the least amount of data residual in all the cases, but fails to capture the complete geometry of the anomaly.

\section{Conclusions and Discussion}
\label{section:discussion}
We discussed a parametric level-set method for partially discrete tomography. We model such objects as a constant-valued shape embedded in a continuously varying background. The shape is represented using a level-set function, which in turn is represented using radial basis functions.
The reconstruction problem is posed as a bi-level optimization problem for the background and level-set parameters. This reconstruction problem can be efficiently solved using a variable projection approach, where the shape is iteratively updated. Each iteration requires a full reconstruction of the background. The algorithm includes some practical heuristics for choosing various parameters that are introduced as part of the parametric level-set method. Numerical experiments on a few numerical phantoms show that the proposed approach can outperform other popular methods for (partially) discrete tomography in terms of reconstruction error. As the proposed algorithm requires repeated full reconstructions, future research is directed at making the method more efficient. 

\textbf{Acknowledgments.} This work is part of the Industrial Partnership Programme (IPP) ‘Computational sciences for energy research’ of the Foundation for Fundamental Research on Matter (FOM), which is part of the Netherlands Organisation for Scientific Research (NWO). This research programme is co-financed by Shell Global Solutions International B.V. The second and third authors are financially supported by the Netherlands Organisation for Scientific Research (NWO) as part of research programmes 613.009.032 and 639.073.506 respectively.
%
%
\bibliographystyle{abbrv}
\bibliography{main.bib}

\end{document}